\DeclarePairedDelimiterX{\ket}[1]{\lvert}{\rangle}{#1}
\DeclarePairedDelimiterX{\kket}[1]{\lvert}{\rangle\rangle}{#1}
\DeclarePairedDelimiterX{\bra}[1]{\langle}{\rvert}{#1}
\DeclarePairedDelimiterX{\bbra}[1]{\langle\langle}{\rvert}{#1}
\newcommand{\abs}[1]{\left\lvert #1 \right\rvert} 
\newcommand{\tr}{\operatorname{Tr}} 
\newcommand{\pnorm}[2][p]{\left\lVert #2 \right\rVert_{#1}} 
\newcommand{\dnorm}[1]{\lVert #1 \rVert_{\diamond}} 
\newcommand{\sepd}{\text{SepD}} 
\newcommand{\sepc}{\text{SepC}} 
\newcommand{\sepp}{\text{SEPP}} 
\newcommand{\fsepp}{\text{FSEPP}} 
\newcommand{\seppsc}{\text{SEPPSC}} 
\theoremstyle{definition}
\theoremstyle{definition}
\newtheorem{theorem}{Theorem}[section]
\newtheorem{proposition}[theorem]{Proposition}
\theoremstyle{remark}
\mathchardef\mhyphen="2D
\begin{document}
	\title{One-Shot Static Entanglement Cost of Bipartite Quantum Channels}
	\author{Ho-Joon Kim}\email{eneration@gmail.com}
	\affiliation{Department of Mathematics and Research Institute for Basic Sciences, Kyung Hee University, Seoul 02447, Korea}
	
	\author{Soojoon Lee}\email{level@khu.ac.kr}
	\affiliation{Department of Mathematics and Research Institute for Basic Sciences, Kyung Hee University, Seoul 02447, Korea}

	\begin{abstract}
		We first investigate the one-shot static entanglement cost to simulate a bipartite quantum channel under the set of non-entangling channels. The lower bound on the one-shot static entanglement cost is given by the generalized robustness of the target channel as well as the robust-generating power of the channel, which implies that the cost necessary to generate a bipartite quantum channel might not be retrievable in general. Next, we conceive a set of quantum channels that extends the set of non-entangling channels. We find out that the one-shot static entanglement cost under the extended set of channels is given by the channel's standard log-robustness. This quantity also gives the one-shot dynamic entanglement cost under a set of free superchannels that do not generate dynamic entanglement resource; the extended set is shown to be equivalent to the set of free superchannels.
	\end{abstract}

	\maketitle
	
\section{Introduction}
Physical resources in nature can show quantum features in two different forms: static resources in quantum states and dynamic resources in quantum dynamics. The prior one has been investigated intensely from the early days of quantum science to these days, focusing on the quantum entanglement that lies behind the most popular quantum paradox of the Schr\"odinger's cat \cite{schroedinger1935GegenwaertigeSituationQuantenmechanik,horodecki2009QuantumEntanglement}. The common theoretical pillar through all the static quantum resources has been understood in the framework of quantum resource theory of states \cite{chitambar2019QuantumResourceTheories}. Meanwhile, the dynamic quantum resources in quantum channels have been studied from the early days as well \cite{eisert2000OptimalLocalImplementation,nielsen2003QuantumDynamicsPhysical}, and it has regained interest in recent years as the general dynamic resource theory \cite{regula2020OneshotManipulationDynamical,xyuan2020OneshotDynamicalResource} as well as its applications concerning quantum computation \cite{xwang2019QuantifyingMagicQuantum-1,seddon2019QuantifyingMagicMultiqubit,takagi2020OptimalResourceCost}. In order to precisely analyze quantum dynamic resources \cite{gour2019HowQuantifyDynamical-1,yliu2020OperationalResourceTheory}, the concept of entropy \cite{gour2018EntropyQuantumChannel}, the coherence \cite{diaz2018UsingReusingCoherence,saxena2020DynamicalResourceTheory}, and the entanglement of quantum channels \cite{gour2019EntanglementBipartiteChannel,bauml2019ResourceTheoryEntanglement} have been investigated, and fault-tolerant quantum computation has been analyzed by applying resource theory of the magic states \cite{bravyi2005UniversalQuantumComputation,howard2017ApplicationResourceTheory,seddon2019QuantifyingMagicMultiqubit,xwang2019QuantifyingMagicQuantum-1,xwang2020EfficientlyComputableBounds}. These systematic investigations on quantum resources are rapidly unifying both static and dynamic quantum resources in a framework based on the fact that quantum states can be regarded as quantum channels with trivial input systems.

In fact, the two types of quantum resources are intimately intertwined; there have been various results on such relations from the beginning of quantum information science. Focusing on quantum entanglement, a static quantum entanglement resource, that is, an entangled bipartite state can be generated through given dynamic entanglement resources such as the CNOT gate or the SWAP gate since they enable to prepare maximally entangled states. Conversely, static entanglement resource can be used to simulate a quantum channel having a single input and output under local operations and classical communication (LOCC) \cite{berta2013EntanglementCostQuantum} and the positive partial transpose (PPT) channels \cite{xwang2018ExactEntanglementCost}. The static entanglement cost and distillable entanglement of a bipartite quantum channel were studied as well \cite{bauml2019ResourceTheoryEntanglement}. In spite of these results, however, it still remains to estimate the necessary amount of static entanglement resource to generate an arbitrary bipartite quantum channel in the one-shot scenario.

In this paper, we try to fill a gap in bridging the two types of quantum entanglement resources by looking into the one-shot static entanglement cost to simulate a bipartite quantum channel under resource non-generating channels. We furthermore investigate the one-shot static entanglement cost of a bipartite quantum channel allowing a larger set of free channels, which ends up in an exact value; it turns out that the extended set of free channels can be interpreted by exploiting dynamic resource theory of entanglement.

In Sec.~\ref{sec: res. theory of ent. and res. monotones}, we set up our notations and recapitulate static and dynamic resource theories of entanglement, and then introduce resource monotones that we use. Sec.~\ref{sec:one-shot cost under SEPP} introduces the one-shot static entanglement cost of a bipartite quantum channel under the set of non-entangling quantum channels; the quantity is shown to be lower-bounded by a resource monotone as well as the entanglement-generating power of the channel. Sec.~\ref{sec:one-shot cost under FSEPP} introduces a new class of free transformations, and the one-shot static entanglement cost under those channels is analyzed. Furthermore, it is argued why the one-shot static entanglement cost under the conceived channels is given by the same quantity that appeared in the one-shot dynamic entanglement cost under non-entangling superchannels. We conclude with Sec.~\ref{sec:conclusion} referring to the connection of our work to general transformations between quantum channels.

\section{Resource Theory of Entanglement and Resource Monotones}\label{sec: res. theory of ent. and res. monotones}
\subsection{Notation}
We use capital letters such as $ A $ and $ B $ to denote either physical systems or Hilbert spaces to describe the systems; Composite systems of system $A$ and $B$ is denoted as $AB$. Curly letters like $ \mathcal{N}_{A} $ for quantum channels which are the completely positive trace-preserving linear maps: the subscript represents the system that the channel acts on. When the input and output systems are different, it can be explicitly written as $ \mathcal{N}_{A\to B} $. The set of quantum channels from a system $ A $ to a system $ B $ is denoted as $ \mathcal{L}(A\!\to\! B) $, while $ \mathcal{L}(A) $ corresponds to that of the same input and output system $ A $. A bipartite quantum channel is a quantum channel that have two input and output systems, respectively, e.g., $ \mathcal{N}_{AB} $. Greek letters like $ \phi_{A} $, $ \psi_{B} $ denote density matrices of pure states of systems in the subscript, and $ \Phi_{AB}^{K} $ is the $ K $-maximally entangled state corresponding to $ \ket{\Phi^{K}}_{AB}=\frac{1}{\sqrt{K}}\sum_{i=0}^{K-1}\ket{ii}_{AB} $. The Choi state of a quantum channel $ \mathcal{N}_{A} $ will be denoted by $ J_{A\widetilde{A}}^{\mathcal{N}_{A}}\coloneqq \mathcal{I}_{A}\otimes \mathcal{N}_{\widetilde{A}}\left (\Phi_{A\widetilde{A}}^{\abs{A}}\right ) $, where $ \mathcal{I}_{A} $ is the identity channel, and $ \abs{A}$ is the dimension of the system $A$. We use the logarithm to base two.

\subsection{Resource theory of entanglement}
A resource theory consists of a set of free resources, either in the form of quantum states or in the form of quantum dynamics, and a set of free transformations that keeps the free resources \cite{chitambar2019QuantumResourceTheories}. The static resource theory of entanglement possesses a physically well-motivated set of free transformations related to locality in manipulating quantum systems, the LOCC channels \cite{horodecki2009QuantumEntanglement}. An LOCC channel consists of any local quantum operations and classical communications so that it allows to prepare free states termed as the separable states that can be written as a sum of product states as
\begin{equation}\label{key}
	\rho_{AB} = \sum_{i} p_{i} \phi_{A}^{(i)}\otimes \psi_{B}^{(i)},
\end{equation}
where $ p_{i}\ge 0 $ and $ \sum_{i}p_{i}=1 $ \cite{werner1989QuantumEPRHidden}. Although the set of LOCC channels is operationally intuitive, it is not a topologically closed set implying the existence of sequences of LOCC channels that do not converge to an LOCC channel \cite{chitambar2012}, incurring mathematical difficulties to manipulate the set fully \cite{chitambar2014}.

There have been several classes of bipartite quantum channels that include LOCC channels and are helpful to understand quantum entanglement \cite{regula2019OneshotEntanglementDistillation}. One of the substitutes for LOCC channels is the set of separable channels (SepC) characterized by their Choi states being separable states \cite{cirac2001}; it is strictly larger than the closure of the set of LOCC channels \cite{bennett1999,chitambar2009,chitambar2012}, is the largest set of bipartite quantum channels that are completely resource non-generating: a separable channel acting on local subsystems of a large system does not generate entanglement as a whole \cite{harrow2003RobustnessQuantumGates}. Meanwhile, the largest set of bipartite quantum channels that keep the set of separable states is the separability-preserving channels (SEPP) or non-entangling channels, which are by definition channels that send separable states to separable states \cite{harrow2003RobustnessQuantumGates}. A quantum state $\rho_{AB}$ is called a positive-partial-transpose (PPT) state \cite{peres1996} if $\rho_{AB}^{T_{A}}\ge 0$ where the superscript $T_{A}$ denotes the partial transpose map on the system $A$, a typical example of positive but not completely positive maps. A bipartite quantum channel is called a PPT channel \footnote{More precisely, it is called a completely-PPT-preserving channel.} if it sends PPT states to PPT states even if the channel acts on the subsystems \cite{rains1999BoundDistillableEntanglement,rains2001SemidefiniteProgramDistillable}; A PPT channel is also characterized by its Choi state being a PPT state.

 The static resource theory of entanglement extends to dynamic resource theories by regarding free channels as free resources and conceiving a set of free superchannels that do not generate resources. In general, a dynamic resource theory treats a set of quantum channels as free resources; a quantum state can also be seen as a quantum channel with a trivial input space $\mathbb{C}$. Transformations between quantum channels are described by supermaps that send linear maps to linear maps. A superchannel is a supermap that can be physically realized \cite{chiribella2008TransformingQuantumOperations,gour2019ComparisonQuantumChannels}; it should send completely positive maps to completely positive maps even when it acts on subsystems. It should preserve trace-preserving properties as well. It is known that any superchannel is equivalent to a pre-processing quantum channel followed by a post-processing quantum channel with an ancillary system.
When it comes to the dynamic resource theory of entanglement, one of the sets mentioned in the previous paragraphs such as LOCC, SEPP, PPT can be taken as a set of free dynamic resources, and superchannels that send free dynamic resources to themselves can be set as free superchannels. For instance, one can consider $\sepc(A\!:\!B)$ as free dynamic resources; A possible choice of free superchannels is the set of separability-preserving superchannels (SEPPSC) that send separable channels to separable channels \cite{hjkim2020OneshotManipulationEntanglement}; in Sec.~\ref{sec:one-shot cost under FSEPP}, we will use this dynamic resource theory of entanglement to interpret the static entanglement cost of a bipartite quantum channel under an extended set from SEPP. The dynamic resource theories of entanglement taking the LOCC channels as the free resources have first been
proposed \cite{gour2019EntanglementBipartiteChannel,gour2020DynamicalEntanglement}, and the dynamic resource theories of entanglement taking SEPP channels \cite{hjkim2020OneshotManipulationEntanglement} and PPT channels \cite{xwang2020CostQuantumEntanglement,xwang2018ExactEntanglementCost,gour2020DynamicalEntanglement,gour2019EntanglementBipartiteChannel,bauml2019ResourceTheoryEntanglement} have been established recently.

\subsection{Resource monotones}
We introduce two classes of resource monotones that can be adapted to quantify both the static and the dynamic quantum resources. Here we focus on the dynamic resource monotones. Let $ \mathbb{F} $ be the set of free channels. The generalized robustness of a quantum channel under the set $ \mathbb{F} $ is a resource monotone that has been well-investigated, and has operational meanings in tasks such as quantum state ensemble discrimination \cite{takagi2019GeneralResourceTheories}, resource erasure \cite{zwliu2017ResourceDestroyingMaps}, and one-shot catalytic dynamic entanglement cost \cite{hjkim2020OneshotManipulationEntanglement,regula2020OneshotManipulationDynamical}. In general, when the set of free resources is convex and closed, convex analysis provides useful tools to construct resource monotones \cite{regula2017ConvexGeometryQuantum,regula2020BenchmarkingOneshotDistillation}. Hereafter, we will consider $ \mathbb{F} $ as a set of free bipartite quantum channels.

The generalized robustness for a bipartite quantum channel $ \mathcal{N}_{AB} $ with respect to $ \mathbb{F} $ is defined as
\begin{align}\label{key}
	R_{\mathbb{F}}(\mathcal{N}_{AB}) &\coloneqq \min\{\lambda: \mathcal{N}_{AB}\le \lambda \mathcal{M}_{AB}, \mathcal{M}_{AB}\in \mathbb{F}\}\\
	&= \min\left \{\lambda: \dfrac{\mathcal{N}_{AB}+(\lambda - 1) \mathcal{M}_{AB}}{\lambda}\in \mathbb{F}\right \}.
\end{align}
From the definition, it is clear that the generalized robustness is the gauge function for the set of free resources \cite{rockafellar1996ConvexAnalysis,regula2017ConvexGeometryQuantum}. The generalized robustness is also related to the max relative entropy $D_{\max}$ of channels as given by
\begin{equation}\label{key}
	R_{\mathbb{F}}(\mathcal{N}_{AB}) = \min_{\mathcal{M}_{AB}\in \mathbb{F}} \exp\left (D_{\max}(\mathcal{N}_{AB}\Vert \mathcal{M}_{AB}) \right ),
\end{equation}
where $ D_{\max}(\mathcal{N}_{AB}\Vert \mathcal{M}_{AB} ) \coloneqq \log \min\{\lambda: \mathcal{N}_{AB}\le \lambda \mathcal{M}_{AB} \}$. The generalized log-robustness of a bipartite channel $ \mathcal{N}_{AB} $ is defined as $ LR_{\mathbb{F}}(\mathcal{N}_{AB})\coloneqq \log R_{\mathbb{F}}(\mathcal{N}_{AB}) $, and its smooth version
$LR_{\mathbb{F}}^{\varepsilon}$
with $ \varepsilon\ge 0 $ is given by
\begin{align}
	LR_{\mathbb{F}}^{\varepsilon}(\mathcal{N}_{AB}) &\coloneqq \min_{\mathcal{N}_{AB}'\approx_{\varepsilon} \mathcal{N}_{AB}} LR_{\mathbb{F}}(\mathcal{N}_{AB}')\\
	&=\min_{\mathcal{M}_{AB}\in\mathbb{F}}D_{\max}^{\varepsilon}(\mathcal{N}_{AB}\Vert \mathcal{M}_{AB}),
\end{align}
where $ \mathcal{N}_{AB}'\approx_{\varepsilon} \mathcal{N}_{AB} $ is a shorthand for the diamond-distance between channels \footnote{The diamond-distance between quantum channels is induced from the diamond norm which is defined for a map $\mathcal{E}_{A}$ as $\dnorm{\mathcal{E}_{A}}\coloneqq \max_{\psi_{AR}}\pnorm[1]{\mathcal{E}_{A}(\psi_{AR})}$ \cite{kitaev1997QuantumComputationsAlgorithms}. It has an operational meaning in the quantum channel discrimination task \cite{piani2009AllEntangledStates}.}, i.e., $\frac{1}{2}\dnorm{\mathcal{N}_{AB}'-\mathcal{N}_{AB}}\le \varepsilon  $, and the smooth max-relative entropy $D_{\max}^{\varepsilon}$ is given by $ D_{\max}^{\varepsilon}(\mathcal{N}_{AB}\Vert \mathcal{M}_{AB} ) = \min_{\mathcal{N}_{AB}'\approx_{\varepsilon} \mathcal{N}_{AB}} D_{\max}(\mathcal{N}_{AB}'\Vert \mathcal{M}_{AB} ) $.

Another class of resource monotone is the standard robustness of bipartite quantum channels with respect to $ \mathbb{F} $, which is defined as
\begin{align}
	R_{s, \mathbb{F}} (\mathcal{N}_{AB}) \coloneqq \min \bigg\{& \lambda: \dfrac{\mathcal{N}_{AB}+(\lambda - 1) \mathcal{M}_{AB}}{\lambda}\in \mathbb{F},\nonumber\\ &\mathcal{M}_{AB}\in \mathbb{F} \bigg\}.
\end{align}
 The standard log-robustness of a bipartite channel $ \mathcal{N}_{AB} $ is defined as $ LR_{s,\mathbb{F}}(\mathcal{N}_{AB})\coloneqq \log R_{s,\mathbb{F}}(\mathcal{N}_{AB}) $, and its smooth version is defined as
\begin{equation}\label{key}
	LR_{s,\mathbb{F}}^{\varepsilon}(\mathcal{N}_{AB}) \coloneqq \min_{\mathcal{N}_{AB}'\approx_{\varepsilon} \mathcal{N}_{AB}} LR_{s,\mathbb{F}}(\mathcal{N}_{AB}').
\end{equation}
The standard robustness of a bipartite channel has an operational meaning as the one-shot dynamic entanglement cost under the set of all separability-preserving superchannels (SEPPSC) which send separable channels to separable channels \cite{hjkim2020OneshotManipulationEntanglement}.
Note that a quantum state can be treated as a quantum channel with the trivial input space, that is, one-dimensional Hilbert space isomorphic to $ \mathbb{C} $ having one and only one quantum state $ 1 $. With this correspondence, the above quantities for quantum channels with a trivial input space reduce to those for quantum states.

\section{One-Shot Static Entanglement Cost under SEPP}\label{sec:one-shot cost under SEPP}
\begin{figure}[tbph]
	\centering
	\includegraphics{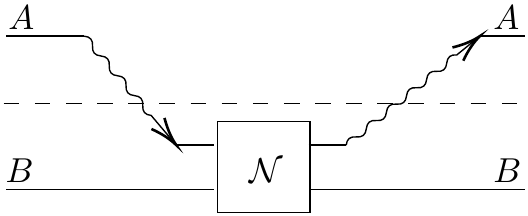}
	\caption{Simulation of a bipartite quantum channel $ \mathcal{N}_{AB} $ exploiting quantum teleportation (denoted via wavy arrows) two times using static entanglement resources.\label{fig: brute-force quantum channel simulation}}
	
\end{figure}
 In this section, we investigate what amount of the static entanglement resources is required to simulate a bipartite quantum channel utilizing SEPP channels which are the maximum resource non-generating channels in static entanglement resource theories. In principle, static entanglement resources in quantum states can be used to simulate any bipartite quantum channel under LOCC as depicted in Fig.~\ref{fig: brute-force quantum channel simulation} by exploiting quantum teleportation twice. This provides a trivial upper bound on the necessary amount of the static entanglement resources to simulate a bipartite quantum channel under LOCC or any other set of free channels which includes LOCC.
\begin{figure}[tbhp]
	\centering
	\includegraphics{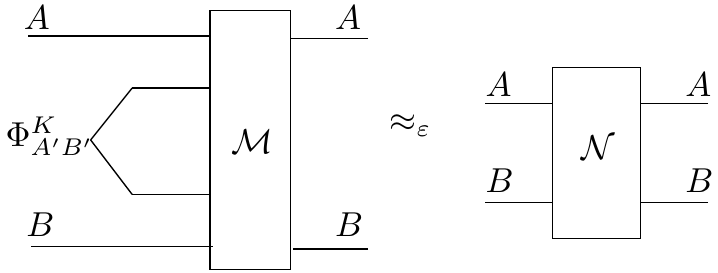}
	\caption{One-shot static entanglement cost of a bipartite channel $ \mathcal{N}_{AB} $ under a separability-preserving channel $ \mathcal{M}_{AA'BB'\to AB} $.}
	\label{fig: static entanglement cost of a channel under sepp}
\end{figure}

More precisely, the one-shot static entanglement cost of a bipartite quantum channel $ \mathcal{N}_{AB} $ under SEPP is an operational quantity that measures the minimum amount of the static entanglement resources $\Phi_{A'B'}^{K}$ to simulate a single instance of the quantum channel $ \mathcal{N}_{AB} $ under SEPP channels as depicted in Fig.~\ref{fig: static entanglement cost of a channel under sepp}. Formally it is defined as follows: Given $\varepsilon \ge 0$,
\begin{align}\label{key}
	&C_{\sepp}^{(1),\varepsilon} (\mathcal{N}_{AB})	\coloneqq \min \bigg\{ \log K : K\in\mathbb{N},\nonumber\\
	&\quad \dfrac{1}{2} \pnorm[\diamond]{\mathcal{N}_{AB} - \mathcal{M}_{AA'BB'\to AB}(\cdot \otimes \Phi_{A'B'}^{K})} \le \varepsilon, \nonumber\\
	&\quad  \mathcal{M}_{AA'BB'\to AB}\in \sepp(AA'\!:\!BB' \rightarrow A\!:\!B)\bigg\},
\end{align}
where $\sepp(AA'\!:\!BB' \rightarrow A\!:\!B)$ is the set of all SEPP channels that send separable states on $AA' \otimes BB'$ to separable states on $A \otimes B$. Thus, the one-shot static entanglement cost of a bipartite quantum channel $\mathcal{N}_{AB}$ is the minimum number of Bell states to approximate the channel under SEPP. Note that one can define the one-shot static entanglement cost using SEPP channels having the same input and output spaces, $AA'\otimes BB'$. It reduces to the above form after tracing out the local subsystems $ A'B' $, that is, 
\begin{equation}\label{key}
    \sepp(AA'\!:\!BB' \to A\!:\!B) = \tr_{A'B'} \sepp(AA'\!:\!BB'),
\end{equation} 
where $\sepp(X\!:\!Y) \coloneqq\sepp(X\!:\!Y\to X\!:\!Y)$.

We find a lower bound on this operational quantity as follows:
\begin{proposition}
	Given $ \varepsilon\ge 0 $, the one-shot static entanglement cost of a bipartite quantum channel $ \mathcal{N}_{AB} $ under SEPP is bounded below by the smooth generalized log-robustness of the channel with respect to SEPP, that is,
	\begin{equation}\label{key}
		LR_{\sepp}^{\varepsilon}(\mathcal{N}_{AB}) \le C_{\sepp}^{(1),\varepsilon} (\mathcal{N}_{AB}).
	\end{equation}
\end{proposition}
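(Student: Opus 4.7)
The plan is to show that any feasible simulation of $\mathcal{N}_{AB}$ at cost $\log K$ can be converted into a proof that the robustness of some $\varepsilon$-approximant of $\mathcal{N}_{AB}$ is at most $K$. Specifically, suppose $(K,\mathcal{M}_{AA'BB'\to AB})$ is feasible for $C_{\sepp}^{(1),\varepsilon}(\mathcal{N}_{AB})$, and set $\mathcal{N}'_{AB}(\cdot)\coloneqq \mathcal{M}_{AA'BB'\to AB}(\cdot\otimes \Phi_{A'B'}^{K})$. By definition $\tfrac{1}{2}\dnorm{\mathcal{N}_{AB}-\mathcal{N}'_{AB}}\le\varepsilon$, so it suffices to show $LR_{\sepp}(\mathcal{N}'_{AB})\le \log K$, since then
\begin{equation}
LR_{\sepp}^{\varepsilon}(\mathcal{N}_{AB})\le LR_{\sepp}(\mathcal{N}'_{AB})\le \log K,
\end{equation}
and minimizing over feasible $(K,\mathcal{M})$ yields the claim.

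The central observation is the operator inequality
\begin{equation}
\Phi_{A'B'}^{K}\le K\,\sigma_{A'B'},\qquad \sigma_{A'B'}\coloneqq \frac{1}{K}\sum_{i=0}^{K-1}\ketbra{i}{A'}{i}\otimes\ketbra{i}{B'}{i},
\end{equation}
which follows because $K\sigma_{A'B'}-\Phi_{A'B'}^{K}$ restricts to $I-\ketbra{\Phi^{K}}{}{\Phi^{K}}\ge 0$ on the $K$-dimensional subspace spanned by $\{\ket{ii}\}$ and vanishes off it. The state $\sigma_{A'B'}$ is manifestly separable.

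Now define the candidate free channel $\mathcal{M}'_{AB}(\cdot)\coloneqq \mathcal{M}_{AA'BB'\to AB}(\cdot\otimes\sigma_{A'B'})$. For any separable input $\rho_{AB}$, the state $\rho_{AB}\otimes\sigma_{A'B'}$ is separable across $AA'\!:\!BB'$, so $\mathcal{M}'_{AB}(\rho_{AB})$ is separable by the SEPP property of $\mathcal{M}$; hence $\mathcal{M}'_{AB}\in\sepp(A\!:\!B)$. Moreover, because $K\sigma_{A'B'}-\Phi_{A'B'}^{K}\ge 0$, tensoring with this fixed positive operator defines a CP map, and composing with the CP map $\mathcal{M}$ yields that
\begin{equation}
K\mathcal{M}'_{AB}-\mathcal{N}'_{AB}=\mathcal{M}_{AA'BB'\to AB}\bigl(\cdot\otimes(K\sigma_{A'B'}-\Phi_{A'B'}^{K})\bigr)
\end{equation}
is completely positive. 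Therefore $\mathcal{N}'_{AB}\le K\mathcal{M}'_{AB}$ in the CP order, so $D_{\max}(\mathcal{N}'_{AB}\Vert\mathcal{M}'_{AB})\le \log K$, giving $R_{\sepp}(\mathcal{N}'_{AB})\le K$ and thus $LR_{\sepp}(\mathcal{N}'_{AB})\le \log K$ as required.

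The only step that needs any care is the verification that $\mathcal{M}'_{AB}$ genuinely lies in $\sepp(A\!:\!B)$ and that the operator inequality is complete rather than merely positive; both follow from the CP nature of $\mathcal{M}$ together with separability of the auxiliary state $\sigma_{A'B'}$. Everything else reduces to the elementary spectral comparison $\Phi^{K}\le K\sigma$, which is the heart of the argument.
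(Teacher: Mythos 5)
Your proof is correct and follows essentially the same route as the paper: your separable state $\sigma_{A'B'}$ is exactly the paper's dephased state $\Delta_{A'}\Phi_{A'B'}^{K}$, and your candidate free channel $\mathcal{M}'_{AB}=\mathcal{M}(\cdot\otimes\sigma_{A'B'})$ is precisely the paper's choice of $\mathcal{F}_{AB}\in\sepp(A\!:\!B)$. The only difference is presentational: where the paper invokes monotonicity of $D_{\max}$ under the supermap $\Psi\mapsto\mathcal{M}(\cdot\otimes\Psi)$ together with $D_{\max}(\Phi^{K}\Vert\Delta_{A'}\Phi^{K})\le\log K$, you unpack that step into an explicit verification that $K\mathcal{M}'_{AB}-\mathcal{N}'_{AB}$ is completely positive.
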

\begin{proof}
	Let $ C_{\sepp}^{(1), \varepsilon} (\mathcal{N}_{AB})= \log K $. There exists a quantum channel
	\begin{equation}
	    \mathcal{M}_{AA'BB'\to AB}\in \sepp(AA'\!:\!BB'\to A\!:\!B)    
	\end{equation}
	such that $ \mathcal{M}_{AA'BB'\to AB}(\cdot \otimes \Phi_{A'B'}^{K})\approx_{\varepsilon} \mathcal{N}_{AB} $. Using the dephasing channel $ \Delta_{A'}(X_{A'})=\sum_{i}\bra{i}X\ket{i}_{A'} \ket{i}\!\bra{i}_{A'} $, we have $ \Delta_{A'}\Phi_{A'B'}^{k} \in \sepd(A'\!:\!B')$, 
	where $\sepd(X\!:\!Y)$ is the set of all separable states on $X \otimes Y$. This leads to $ \mathcal{M}_{AA'BB'\to AB}(\cdot \otimes\Delta_{A'}\Phi_{A'B'}^{k})\in \sepp(A\!:\!B) $. We have that
	\begin{align}
		LR_{\sepp}^{\varepsilon}(\mathcal{N}_{AB})&\le \,LR_{\sepp} (\mathcal{M}(\cdot \otimes\Phi_{A'B'}^{K}))\nonumber\\
		&= \min_{\mathcal{F}_{AB}\in \sepp} D_{\max}(\mathcal{M}(\cdot \otimes\Phi_{A'B'}^{K})\Vert \mathcal{F}_{AB})\nonumber\\
		&\le\, D_{\max}(\mathcal{M}(\cdot \otimes\Phi_{A'B'}^{K})\Vert \mathcal{M}(\cdot \otimes\Delta_{A'}\Phi_{A'B'}^{K}))\nonumber\\
		&\le\, D_{\max} (\Phi_{A'B'}^{K} \Vert \Delta_{A'}\Phi_{A'B'}^{K})\nonumber\\
		&\le\, \log K,
		\label{eq:subscript}
	\end{align}
	where the subscript in $\mathcal{M}_{AA'BB'\to AB}$ is suppressed for readability.	This completes the proof.
\end{proof}

Next, we show that the static entanglement resource necessary to simulate a bipartite quantum channel is always greater than or equal to static entanglement resources generated from separable states. The maximum static entanglement that a bipartite quantum channel can generate from a separable state can be quantified by
\begin{equation}\label{key}
	P( \mathcal{N}_{AB} ) \coloneqq \max_{\sigma_{AB}\in\sepd(A\!:\!B)} R_{\sepp}\left( \mathcal{N}_{AB}(\sigma_{AB}) \right),
\end{equation}
which is called the robustness-generating power \cite{zanardi2000,takagi2019GeneralResourceTheories} of a bipartite quantum channel $ \mathcal{N}_{AB} $ \footnote{One can define the quantity with maximization over $ \sepd(AA'\!:\!BB') $.}.
Its smooth version is defined as $ P^{\varepsilon}(\mathcal{N}_{AB}) = \min_{\mathcal{N}_{AB}'\approx_{\varepsilon} \mathcal{N}_{AB}} P(\mathcal{N}_{AB}')$.
We find that the robust-generating power of a channel is no more than the one-shot static entanglement cost to simulate the channel as follows:
\begin{proposition}
	Given $ \varepsilon\ge 0 $ and a bipartite quantum channel $ \mathcal{N}_{AB} $, the following inequality holds:
	\begin{equation}\label{key}
		\log P^{\varepsilon}( \mathcal{N}_{AB} ) \le C_{\sepp}^{(1), \varepsilon} (\mathcal{N}_{AB}).
	\end{equation}
\end{proposition}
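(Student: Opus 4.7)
The plan is to mirror the structure of the previous proposition's proof, replacing the global bound on $\mathcal{M}(\cdot\otimes\Phi^K)$ with the pointwise bound on $R_{\sepp}$ evaluated on separable inputs. First I would unpack the definition of $C_{\sepp}^{(1),\varepsilon}(\mathcal{N}_{AB})=\log K$: there exists a channel $\mathcal{M}_{AA'BB'\to AB}\in \sepp(AA'\!:\!BB'\to A\!:\!B)$ such that the channel $\mathcal{N}'_{AB}\coloneqq \mathcal{M}_{AA'BB'\to AB}(\cdot\otimes \Phi^{K}_{A'B'})$ satisfies $\mathcal{N}'_{AB}\approx_{\varepsilon}\mathcal{N}_{AB}$. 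Since $P^{\varepsilon}$ is defined by a minimization over $\varepsilon$-close channels, it suffices to establish $\log P(\mathcal{N}'_{AB})\le \log K$.

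Next I would fix an arbitrary $\sigma_{AB}\in\sepd(A\!:\!B)$ and bound $R_{\sepp}(\mathcal{N}'_{AB}(\sigma_{AB}))$ by $K$. The key ingredient is the dephasing trick already used in the previous proof: the operator inequality
\begin{equation}
\Phi^{K}_{A'B'}\le K\,\Delta_{A'}\Phi^{K}_{A'B'},
\end{equation}
which follows from $D_{\max}(\Phi^{K}\Vert \Delta_{A'}\Phi^{K})=\log K$. Tensoring with the positive operator $\sigma_{AB}$ preserves the order, and applying the completely positive map $\mathcal{M}_{AA'BB'\to AB}$ yields
\begin{equation}
\mathcal{N}'_{AB}(\sigma_{AB})\le K\,\mathcal{M}_{AA'BB'\to AB}(\sigma_{AB}\otimes \Delta_{A'}\Phi^{K}_{A'B'}).
\end{equation}

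The remaining point is to observe that the right-hand side is a free state. Because $\sigma_{AB}\in\sepd(A\!:\!B)$ and $\Delta_{A'}\Phi^{K}_{A'B'}\in\sepd(A'\!:\!B')$, the tensor $\sigma_{AB}\otimes \Delta_{A'}\Phi^{K}_{A'B'}$ is separable across $AA'\!:\!BB'$, so the SEPP property of $\mathcal{M}_{AA'BB'\to AB}$ guarantees that the output sits in $\sepd(A\!:\!B)$. By the definition of the generalized robustness for states, this gives $R_{\sepp}(\mathcal{N}'_{AB}(\sigma_{AB}))\le K$. Maximizing over $\sigma_{AB}\in\sepd(A\!:\!B)$ yields $P(\mathcal{N}'_{AB})\le K$, and hence $\log P^{\varepsilon}(\mathcal{N}_{AB})\le \log P(\mathcal{N}'_{AB})\le \log K=C_{\sepp}^{(1),\varepsilon}(\mathcal{N}_{AB})$.

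I do not expect any serious obstacle here: every step reuses structural facts already set up in the paper, namely the operator inequality between a maximally entangled state and its dephased counterpart, the preservation of positive-semidefinite order under tensoring and under CP maps, and the definition of SEPP as sending separable inputs to separable outputs. The only subtlety worth flagging is making explicit that the separable input $\sigma_{AB}\otimes \Delta_{A'}\Phi^{K}_{A'B'}$ respects the bipartition $AA'\!:\!BB'$ with which $\mathcal{M}_{AA'BB'\to AB}$ is SEPP, so that the free-channel assumption can be invoked.
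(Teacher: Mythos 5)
Your proof is correct and follows essentially the same route as the paper's: both reduce the claim to showing $R_{\sepp}\left(\mathcal{M}_{AA'BB'\to AB}(\sigma_{AB}\otimes\Phi^{K}_{A'B'})\right)\le K$ for every $\sigma_{AB}\in\sepd(A\!:\!B)$, then maximize over $\sigma_{AB}$ and pass to the smooth quantity. The only difference is presentational: the paper invokes monotonicity of the generalized robustness under SEPP channels together with $R_{\sepp}(\sigma_{AB}\otimes\Phi^{K}_{A'B'})=R_{\sepp}(\Phi^{K}_{A'B'})=K$ as known facts, whereas you inline their proofs by exhibiting the explicit separable witness $\mathcal{M}_{AA'BB'\to AB}(\sigma_{AB}\otimes\Delta_{A'}\Phi^{K}_{A'B'})$ through the operator inequality $\Phi^{K}_{A'B'}\le K\,\Delta_{A'}\Phi^{K}_{A'B'}$ --- the same dephasing trick the paper already used in its first proposition.
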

\begin{proof}
	Let $ \mathcal{M}_{AA'BB'\to AB} $ be a SEPP channel that simulates the quantum channel $ \mathcal{N}_{AB} $ with the static entanglement resource $ \Phi_{A'B'}^{K} $ such that
	\begin{equation}
	    \mathcal{N}_{AB}^{\varepsilon} \coloneqq \mathcal{M}_{AA'BB'\to AB}\left( \cdot \otimes \Phi_{A'B'}^{K} \right) \approx_{\varepsilon} N_{AB}.
	\end{equation}

	For any separable state $ \sigma_{AB}\in\sepd(A\!:\!B) $, we have that
	\begin{align}\label{key}
		R_{\sepp}\left( \mathcal{N}_{AB}^{\varepsilon}(\sigma_{AB}) \right) &= R_{\sepp}\left(  \mathcal{M}\left( \sigma_{AB}\otimes \Phi_{A'B'}^{K} \right) \right)\nonumber\\
		&\le R_{\sepp}\left( \sigma_{AB}\otimes \Phi_{A'B'}^{K} \right)\nonumber\\
		&= R_{\sepp}(\Phi_{A'B'}^{K})\nonumber\\
		&=K,
	\end{align}
	where the subscript in $\mathcal{M}_{AA'BB'\to AB}$ is suppressed for readability as in (\ref{eq:subscript}).	Therefore, it follows that
	\begin{align}
		C_{\sepp}^{(1), \varepsilon} (\mathcal{N}_{AB})&=\log K\nonumber\\
		&\ge \log \max_{\sigma_{AB}\in\sepd(A\!:\!B)} R_{\sepp}\left( \mathcal{N}_{AB}^{\varepsilon}(\sigma_{AB}) \right)\nonumber\\
		&=\log P( \mathcal{N}_{AB}^{\varepsilon} )\nonumber\\
		&\ge \log P^{\varepsilon}( \mathcal{N}_{AB}).
	\end{align}
	This completes the proof.
\end{proof}
The above result implies that the entanglement capacity or output static entanglement of a bipartite channel cannot be larger than the one-shot static entanglement cost necessary to simulate the channel in general.

\section{One-Shot Static Entanglement Cost under extended SEPP}\label{sec:one-shot cost under FSEPP}
In this section, we introduce a set of free channels larger than SEPP, and then calculate the one-shot static entanglement cost of a bipartite channel under the set. Recall that a multipartite quantum state $ \rho_{A_{1}\cdots A_{m}} $ is called a fully separable state if it can be written as a convex sum of product states as follows \cite{dur1999SeparabilityDistillabilityMultiparticle}:
\begin{equation}\label{key}
	\rho_{A_{1}\cdots A_{m}}=\sum_{j}p_{j}\rho_{A_{1}}^{(j)}\otimes \cdots \otimes \rho_{A_{m}}^{(j)}.
\end{equation}
We call a quantum channel sending a composite system $ A_{1}\dots A_{m} $ to a composite system $ B_{1}\dots B_{n} $ fully separability-preserving (FSEPP) if it sends a fully separable state to a fully separable state; the set of such channels is denoted as $ \fsepp(A_{1}\!:\!\cdots\!:\!A_{m}\rightarrow B_{1}\!:\!\cdots\!:\!B_{n}) $. When there are only two subsystems $ A $ and $ B $, we have the equality $ \fsepp(A\!:\!B) = \sepp(A\!:\!B) $.

The one-shot static entanglement cost of a bipartite quantum channel $ \mathcal{N}_{AB} $ under FSEPP, $C_{\fsepp}^{(1),\varepsilon}(\mathcal{N}_{AB})$, is an operational quantity defined as
\begin{align}
	&C_{\fsepp}^{(1),\varepsilon}(\mathcal{N}_{AB})\coloneqq \min \bigg\{ \log K : K\in\mathbb{N},\nonumber\\
	&\quad\dfrac{1}{2} \pnorm[\diamond]{\mathcal{N}_{AB} - \mathcal{M}_{AA'BB'\to AB}(\cdot \otimes \Phi_{A'B'}^{K})} \le \varepsilon,\nonumber\\
	&\quad\mathcal{M}_{AA'BB'\to AB}\in \fsepp(A\!:\!A'\!:\!B\!:\!B'\rightarrow A\!:\!B)\bigg\}.
\end{align}
Note that here we allow a set of free channels which is larger than the set $ \sepp(AA'\!:\!BB' \rightarrow A\!:\!B) $ in the previous section, since $ \mathcal{M}_{AA'BB'\to AB} $ is only required to send a (strict) subset of $ \sepd(AA'\!:\!BB') $ to the same set $ \sepd(A\!:\!B) $.
The main result of our work is as follows:
\begin{theorem}\label{thm: static ent. cost under fsepp}
	Given $ \varepsilon \ge 0 $, the one-shot static entanglement cost of a bipartite quantum channel $ \mathcal{N}_{AB} $ under FSEPP is given by
	\begin{equation}\label{key}
		C_{\fsepp}^{(1), \varepsilon} (\mathcal{N}_{AB})  = LR_{s,\sepp}^{\varepsilon}(\mathcal{N}_{AB}).
	\end{equation}
\end{theorem}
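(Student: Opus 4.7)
The plan is to prove the claimed equality by establishing both inequalities separately, each built on a single ingredient: an explicit decomposition
\begin{equation*}
\Phi^{K}_{A'B'} = K\sigma_{A'B'} - (K-1)\pi_{A'B'}
\end{equation*}
with $\sigma,\pi \in \sepd(A'\!:\!B')$, together with the standard fidelity bound $\bra{\Phi^{K}}\tau_{A'B'}\ket{\Phi^{K}}\le 1/K$ valid for every $\tau_{A'B'}\in \sepd(A'\!:\!B')$. Concretely I would take $\pi := \frac{1}{K(K-1)}\sum_{i\neq j}\ket{ij}\bra{ij}_{A'B'}$, which is manifestly separable, and set $\sigma := \frac{1}{K}(\Phi^{K}+(K-1)\pi)$. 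The nontrivial part is the separability of $\sigma$, which I would verify through the phase-randomised representation
\begin{equation*}
\sigma = \int_{[0,2\pi]^{K}}\!\frac{d\vec{\phi}}{(2\pi)^{K}}\, \ket{\alpha(\vec{\phi})}\bra{\alpha(\vec{\phi})}_{A'}\otimes \ket{\beta(\vec{\phi})}\bra{\beta(\vec{\phi})}_{B'},
\end{equation*}
with $\ket{\alpha(\vec{\phi})} := \frac{1}{\sqrt{K}}\sum_{i}e^{i\phi_{i}}\ket{i}$ and $\ket{\beta(\vec{\phi})} := \frac{1}{\sqrt{K}}\sum_{j}e^{-i\phi_{j}}\ket{j}$, since the $\vec{\phi}$-average reproduces exactly the matrix elements of $\sigma$.

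For the converse $LR_{s,\sepp}^{\varepsilon}(\mathcal{N}_{AB}) \le C_{\fsepp}^{(1),\varepsilon}(\mathcal{N}_{AB})$, I would fix an optimal FSEPP simulator $\mathcal{M}_{AA'BB'\to AB}$ with $\mathcal{N}'_{AB}:=\mathcal{M}(\cdot\otimes\Phi^{K}_{A'B'})\approx_{\varepsilon}\mathcal{N}_{AB}$ achieving $\log K = C_{\fsepp}^{(1),\varepsilon}(\mathcal{N}_{AB})$, and define $\mathcal{M}_{2}(\rho):= \mathcal{M}(\rho\otimes\sigma)$ and $\mathcal{M}_{1}(\rho):=\mathcal{M}(\rho\otimes\pi)$. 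Because $\sigma,\pi$ lie in $\sepd(A'\!:\!B')$, for every $\rho\in \sepd(A\!:\!B)$ the tensor states $\rho\otimes\sigma$ and $\rho\otimes\pi$ are fully separable on $A\!:\!A'\!:\!B\!:\!B'$, and the FSEPP property of $\mathcal{M}$ then places $\mathcal{M}_{1},\mathcal{M}_{2}\in \sepp(A\!:\!B)$. Feeding $\Phi^{K}=K\sigma-(K-1)\pi$ into $\mathcal{M}$ by linearity gives $\mathcal{N}' = K\mathcal{M}_{2}-(K-1)\mathcal{M}_{1}$, so $R_{s,\sepp}(\mathcal{N}')\le K$ and hence $LR_{s,\sepp}^{\varepsilon}(\mathcal{N}_{AB})\le LR_{s,\sepp}(\mathcal{N}')\le \log K$.

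For the achievability $C_{\fsepp}^{(1),\varepsilon}(\mathcal{N}_{AB}) \le LR_{s,\sepp}^{\varepsilon}(\mathcal{N}_{AB})$, I would start from a near-optimal smoothing $\mathcal{N}'\approx_{\varepsilon}\mathcal{N}$ realising $\mathcal{N}'+(\lambda-1)\mathcal{M}_{1} = \lambda \mathcal{M}_{2}$ with $\mathcal{M}_{1},\mathcal{M}_{2}\in \sepp(A\!:\!B)$ and $\lambda = 2^{LR_{s,\sepp}^{\varepsilon}(\mathcal{N})}$, pick the smallest integer $K\ge\lambda$, and construct the candidate simulator by a two-outcome measurement on $A'B'$ in the basis $\{\ket{\Phi^{K}}\bra{\Phi^{K}},\,I-\ket{\Phi^{K}}\bra{\Phi^{K}}\}$:
\begin{equation*}
\begin{split}
\mathcal{M}(X) :={}& \mathcal{N}'\!\left(\bra{\Phi^{K}}_{A'B'}X\ket{\Phi^{K}}_{A'B'}\right) \\
&+ \mathcal{M}_{1}\!\left(\tr_{A'B'}\bigl[(I_{AB}\otimes(I-\ket{\Phi^{K}}\bra{\Phi^{K}}))X\bigr]\right).
\end{split}
\end{equation*}
This is manifestly CPTP and obeys $\mathcal{M}(\rho\otimes\Phi^{K})=\mathcal{N}'(\rho)$ by orthogonality. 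On a product input $\alpha_{A}\otimes\beta_{A'}\otimes\gamma_{B}\otimes\delta_{B'}$, setting $p_{0}:=\bra{\Phi^{K}}(\beta\otimes\delta)\ket{\Phi^{K}}$ and eliminating $\mathcal{N}'$ via $\mathcal{N}'=\lambda\mathcal{M}_{2}-(\lambda-1)\mathcal{M}_{1}$, the output reduces to $p_{0}\lambda\,\mathcal{M}_{2}(\alpha\otimes\gamma)+(1-p_{0}\lambda)\,\mathcal{M}_{1}(\alpha\otimes\gamma)$. The fidelity bound $p_{0}\le 1/K$ combined with $K\ge\lambda$ yields $p_{0}\lambda\le 1$, so this is a convex combination of two states that are separable on $A\!:\!B$ by the SEPP property of $\mathcal{M}_{1},\mathcal{M}_{2}$; extending by linearity to all fully separable inputs certifies $\mathcal{M}\in\fsepp$ and therefore $C_{\fsepp}^{(1),\varepsilon}\le \log K$.

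The hardest step will be demonstrating the separability of $\sigma$: it saturates the maximal separable-state overlap $1/K$ with $\Phi^{K}$ and so sits on the boundary of the separable cone (PPT alone would not suffice beyond $K=2$), so the phase-randomised integral representation (or an equivalent convex argument) is essential. A minor secondary point is the integer rounding $K=\lceil\lambda\rceil$ in the achievability step, which is absorbed into a small readjustment of the smoothing parameter.
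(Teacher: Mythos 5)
Your proof is correct, and its two halves relate to the paper's proof differently. The achievability direction is essentially the paper's own construction: the paper also builds the simulator as a two-outcome measurement $\{\Phi^{K}_{A'B'},\, I_{A'B'}-\Phi^{K}_{A'B'}\}$ on $A'B'$, applying the smoothed channel $\mathcal{N}'$ on the first outcome and the free channel from the robustness decomposition (your $\mathcal{M}_{1}$, the paper's $\mathcal{N}''$) on the second, and it certifies the FSEPP property with the same overlap bound $\tr\left(\Phi^{K}\sigma'\right)\le 1/K$ for separable $\sigma'$, writing the output on product inputs as a convex combination $q\,\mathcal{M}^{\ast}+(1-q)\,\mathcal{N}''$ exactly as you write $p_{0}\lambda\,\mathcal{M}_{2}+(1-p_{0}\lambda)\,\mathcal{M}_{1}$. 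The converse is where you genuinely diverge: the paper disposes of it in one line by monotonicity, $LR_{s,\fsepp}^{\varepsilon}(\mathcal{N})\le LR_{s,\fsepp}(\mathcal{M}(\cdot\otimes\Phi^{K}))\le LR_{s,\fsepp}(\cdot\otimes\Phi^{K})\le \log K$, which silently relies on the known fact that the standard robustness of $\Phi^{K}$ equals $K$; you instead prove that fact from scratch via the decomposition $\Phi^{K}=K\sigma-(K-1)\pi$, with separability of $\sigma$ established by phase randomization (your integral does reproduce $\sigma=\frac{1}{K}\Phi^{K}+\frac{K-1}{K}\pi$, since the phase average kills all matrix elements except $\ket{ii}\!\bra{kk}$ and $\ket{ij}\!\bra{ij}$), and then convert it directly into the explicit robustness decomposition $\mathcal{N}'=K\mathcal{M}_{2}-(K-1)\mathcal{M}_{1}$ with $\mathcal{M}_{1},\mathcal{M}_{2}\in\sepp(A\!:\!B)$. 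What the paper's version buys is brevity, at the price of invoking monotonicity and a standard-robustness value as external facts; what yours buys is self-containedness, making explicit precisely which property of FSEPP is used (product-separable inputs map to separable outputs). Finally, the integer-rounding caveat you flag ($K=\lceil\lambda\rceil$) is real, but the paper shares it — it sets $K=R_{s,\sepp}^{\varepsilon}(\mathcal{N}_{AB})$ and uses $\Phi^{K}$ without checking integrality — so your explicitness here exceeds, rather than falls short of, the paper's own level of rigor.
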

\begin{proof}
	Note that $ R_{s,\fsepp}^{\varepsilon}(\mathcal{N}_{AB})=R_{s,\sepp}^{\varepsilon}(\mathcal{N}_{AB}) $ due to $\fsepp(A\!:\!B) = \sepp(A\!:\!B)$. Let $ K = R_{s,\sepp}^{\varepsilon}(\mathcal{N}_{AB}) $. There exist quantum channels $ \mathcal{N}_{AB}'\approx_{\varepsilon} \mathcal{N}_{AB} $ and $ \mathcal{N}_{AB}'' \in \sepp(A\!:\!B)$ such that
	\begin{equation}\label{key}
		\mathcal{M}_{AB}^{\ast} = \dfrac{1}{K}\mathcal{N}_{AB}' + \left( 1-\dfrac{1}{K} \right)\mathcal{N}_{AB}''\in \sepp(A\!:\!B).
	\end{equation}
	We construct an $ \varepsilon $-simulating channel $ \mathcal{M}_{AA'BB'\to AB} $ as follows:
	\begin{align}
		&\mathcal{M}_{AA'BB'\to AB} \left( \rho_{AA'BB'} \right)\nonumber\\
		&\coloneqq\, \mathcal{N}_{AB}' \left(  \tr_{A'B'} \left( \Phi_{A'B'}^{K}\rho_{AA'BB'}\right)\right )\nonumber\\
		&\quad+\mathcal{N}_{AB}'' \left(  \tr_{A'B'}\left\{ (I_{A'B'}-\Phi_{A'B'}^{K})\rho_{AA'BB'}\right\}\right )\nonumber\\
		&=\,\tr \left( \Phi_{A'B'}^{K}\rho_{A'B'} \right) \mathcal{N}_{AB}'(\rho_{AB}')\nonumber\\
		&\quad +\tr \left\{ (I_{A'B'}-\Phi_{A'B'}^{K}) \rho_{A'B'} \right\} \mathcal{N}_{AB}''(\rho_{AB}''),
	\end{align}
	where $ \rho_{AB}' $ and $ \rho_{AB}'' $ are the post-measurement states of $ \rho_{AA'BB'} $ depending on the outcomes of the measurement $ \{\Phi_{A'B'}^{K}, I_{A'B'}-\Phi_{A'B'}^{K} \} $.
	Apparently, we have that
	\begin{equation}
	    \mathcal{M}_{AA'BB'\to AB} \left( \rho_{AB}\otimes \Phi_{A'B'}^{K} \right) = \mathcal{N}_{AB}'(\rho_{AB}).
	\end{equation}
	Furthermore, for $ \sigma_{AB}\in\sepd(A\!:\!B)$ and $\sigma_{A'B'}'\in\sepd(A'\!:\!B') $, it follows that
	\begin{align}
		&\mathcal{M}_{AA'BB'\to AB} (\sigma_{AB}\otimes \sigma_{A'B'}' )\nonumber\\
		&=\, \tr \left( \Phi_{A'B'}^{K}\sigma_{A'B'}' \right) \mathcal{N}_{AB}'(\sigma_{AB})\nonumber\\
		&\quad +\tr \left\{ (I_{A'B'}-\Phi_{A'B'}^{K}) \sigma_{A'B'}' \right\} \mathcal{N}_{AB}''(\sigma_{AB})\nonumber\\
	&= \,q\mathcal{M}_{AB}^{\ast}(\sigma_{AB}) 
	+(1-q)\mathcal{N}_{AB}''(\sigma_{AB})\nonumber\\
		&\in\, \sepd(A\!:\!B),
	\end{align}
	where $ q = K \tr \left( \Phi_{A'B'}^{K}\sigma_{A'B'}' \right)\le 1 $ because $\sigma'_{A'B'}$ is separable~\cite{horodecki1999general}. Since any fully separable state $ \sigma_{AA'BB'} $ can be written as a convex sum of product states of separable states such as $ \sigma_{AB}\otimes \sigma_{A'B'}' $ for $ \sigma_{AB}\in\sepd(A\!:\!B)$ and $\sigma_{A'B'}'\in\sepd(A'\!:\!B') $, the above result proves that
	\begin{equation}
	    \mathcal{M}_{AA'BB'\to AB}\in \fsepp(A\!:\!A'\!:\!B\!:\!B'\rightarrow A\!:\!B).
	\end{equation}
	
	The lower bound follows from the monotonicity of the standard log-robustness with respect to the compositions with free channels. Let $ C_{\fsepp}^{(1), \varepsilon} (\mathcal{N}_{AB}) = \log K $. There exists $ \mathcal{M}_{AA'BB'\to AB}\in \fsepp(A\!:\!A'\!:\!B\!:\!B'\to A\!:\!B)  $ such that $ \mathcal{M}_{AA'BB'\to AB}(\cdot \otimes \Phi_{A'B'}^{K})\approx_{\varepsilon} \mathcal{N}_{AB} $. It follows that
	\begin{align}
		LR_{s, \sepp}^{\varepsilon} (\mathcal{N}_{AB}) &=LR_{s, \fsepp}^{\varepsilon} (\mathcal{N}_{AB})\nonumber\\
		&\le LR_{s, \fsepp} (\mathcal{M}
		(\cdot \otimes\Phi_{A'B'}^{K}))\nonumber\\
		&\le LR_{s, \fsepp} (\cdot \otimes\Phi_{A'B'}^{K})\nonumber\\
		&\le \log K\nonumber\\
		&= C_{\fsepp}^{(1), \varepsilon} (\mathcal{N}_{AB}),
	\end{align}
	where $\sepp(A\!:\!B)=\fsepp(A\!:\!B)$ is used in the first line. This completes the proof.
\end{proof}

The above result can be alternatively understood in a dynamic entanglement resource theory. Taking the separable channels as the free resources, a superchannel is called separability-preserving superchannel (SEPPSC) if it sends separable channels to separable channels \cite{hjkim2020OneshotManipulationEntanglement}. Consider a set of separable channels with the trivial input space $ \sepc(\mathbb{C}\!\to\! A\!:\!B) $, which is isomorphic to the set of separable states $ \sepd(A\!:\!B) $, and a superchannel $ \Theta: \mathcal{L}(\mathbb{C}\!\to\! A'B')\!\to\! \mathcal{L}(AB) $ with the specified input and output space. The simulation of a quantum channel in previous paragraphs can be seen as a transformation from 
the input channels, here corresponding to $ \mathcal{L}(\mathbb{C}\to A'B') $, to the output channels, which is equivalent to a quantum channel in $ \mathcal{L}(AB) $ as depicted in Fig. \ref{fig: static cost as dyn. cost} \cite{chiribella2008TransformingQuantumOperations,gour2019ComparisonQuantumChannels}.

\begin{figure}[tbhp]
	\centering
	\includegraphics[width=\linewidth]{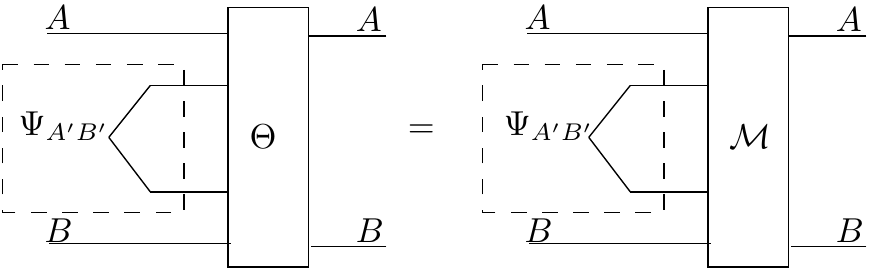}
	\caption{A superchannel $ \Theta:\mathcal{L}(\mathbb{C}\to A'B')\to \mathcal{L}(AB) $ is equivalent to a quantum channel $ \mathcal{M}\in \mathcal{L}(AA'BB'\to AB) $. The dashed rectangle encloses the input channel $ \Psi_{A'B'} $.\label{fig: static cost as dyn. cost}}
\end{figure}
Now we argue that $ \seppsc(\mathcal{L}(\mathbb{C}\to A'B')\to \mathcal{L}(AB)) $ is equivalent to $ \fsepp(A\!:\!A'\!:\!B\!:\!B'\rightarrow A\!:\!B) $. Consider a superchannel $\Theta\in\seppsc(\mathcal{L}(\mathbb{C}\to A'B')\to \mathcal{L}(AB))$. Since a superchannel consists of a pre- and post-quantum channels connected by an ancillary system \cite{chiribella2008TransformingQuantumOperations}, $ \Theta $ is implemented by a quantum channel $\mathcal{M}_{AA'BB'\rightarrow AB}\in \mathcal{L}(AA'BB'\to AB)$. Thus, the condition for $ \Theta $ being SEPPSC is, by definition, given by the condition for the quantum channel $\mathcal{M}_{AA'BB'\rightarrow AB} $ satisfying
\begin{equation}\label{key}
	\mathcal{M}_{AA'BB'\rightarrow AB}\left( \sigma_{AB}\otimes \sigma_{A'B'} \right)\in\sepd(A\!:\!B)
\end{equation}
for all $ \sigma_{AB}\in \sepd(A\!:\!B) $ and $ \sigma_{A'B'}\in \sepd(A'\!:\!B') $. As noted in the proof of Theorem \ref{thm: static ent. cost under fsepp}, this is equivalent to that $\mathcal{M}_{AA'BB'\rightarrow AB}\in \fsepp(A\!:\!A'\!:\!B\!:\!B'\rightarrow A\!:\!B) $. Hence, we can interpret the one-shot static entanglement cost of a quantum channel under FSEPP as the one-shot dynamic entanglement cost of a quantum channel under SEPPSC by regarding the static entanglement resource as a dynamic entanglement resource with a trivial input space.

\section{Conclusion}\label{sec:conclusion}
We have presented a lower bound on the one-shot static entanglement cost of a bipartite quantum channel under the set of SEPP channels given by the generalized log-robustness of the channel. The maximum static entanglement that can be generated by a bipartite quantum channel from a separable state has also been shown to be less than or equal to the one-shot static entanglement cost to simulate the channel in general.

We have defined the FSEPP channels, an extension of SEPP channels. The one-shot static entanglement cost of a bipartite quantum channel under the set of FSEPP channels is given by the standard log-robustness of the target channel. The latter also gives the one-shot dynamic entanglement cost under SEPPSC, the set of dynamic resource non-generating superchannels. One can understand this coincidence by regarding the static entanglement resource used to generate the target channel as a dynamic entanglement resource with trivial inputs; then the free quantum channel in the setting corresponds to the free superchannel in the dynamic resource theory. Since the free superchannel consists of a single quantum channel due to the trivial input space, the set of the free superchannel coincides with the set of FSEPP. This explains the reason why the one-shot static entanglement cost under FSEPP should be given by the standard log-robustness of the target channel, matching the result of the dynamic entanglement theory under SEPPSC \cite{hjkim2020OneshotManipulationEntanglement}.

Dynamic resource theories possess intricate structures that are absent in the static resource theories, although they share many facets analogous to those of the latter. An intriguing complication arises when one considers quantum channel transformations between many copies \cite{kfang2020NogoTheoremsQuantum,regula2020OneshotManipulationDynamical,regula2020FundamentalLimitationsQuantum}; With fixed causal order, $n$-copies of quantum channels can be combined in parallel, sequential, or adaptive ways where superchannels form a quantum comb \cite{chiribella2008QuantumCircuitArchitecture,chiribella2009TheoreticalFrameworkQuantum}. It is even possible to use $n$-copies of quantum channels without definite causal order in a quantum process \cite{chiribella2013QuantumComputationsDefinite}. These possibilities raise mathematically challenging problems as well as questions concerning their physical meanings. When it comes to the dynamic resource theory of entanglement, a few pioneering works already exist tackling the most general transformations \cite{gour2019EntanglementBipartiteChannel,bauml2019ResourceTheoryEntanglement}. Our work avoids such issues focusing on the one-shot scenario dealing with a single-copy transformations though, it already shows a subtle relation between the static and the dynamic entanglement resources as shown in the previous section; We wish that this work is helpful to the exploration of various dynamic resources.

\begin{acknowledgements}
H.-J. Kim acknowledges discussions with M. Plenio, L. Lami, and T. Theurer at Ulm university; L. Lami helped us clarify the definition of FSEPP. This research was supported by the National Research Foundation of Korea (NRF) grant funded by the Ministry of Science and ICT (MSIT) (Grant no.\ NRF-2019R1A2C1006337) and (Grant no.\ NRF-2020M3E4A1079678). S.L.\ acknowledges support from the MSIT, Korea, under the Information Technology Research Center support program (IITP-2021-2018-0-01402) supervised by the Institute for Information \& Communications Technology Planning \& Evaluation, and the Quantum Information Science and Technologies program of the NRF funded by the MSIT (No. 2020M3H3A1105796).
\end{acknowledgements}

\bibliography{[ref]_quantum_information}

\begin{thebibliography}{56}%
\makeatletter
\providecommand \@ifxundefined [1]{%
 \@ifx{#1\undefined}
}%
\providecommand \@ifnum [1]{%
 \ifnum #1\expandafter \@firstoftwo
 \else \expandafter \@secondoftwo
 \fi
}%
\providecommand \@ifx [1]{%
 \ifx #1\expandafter \@firstoftwo
 \else \expandafter \@secondoftwo
 \fi
}%
\providecommand \natexlab [1]{#1}%
\providecommand \enquote  [1]{``#1''}%
\providecommand \bibnamefont  [1]{#1}%
\providecommand \bibfnamefont [1]{#1}%
\providecommand \citenamefont [1]{#1}%
\providecommand \href@noop [0]{\@secondoftwo}%
\providecommand \href [0]{\begingroup \@sanitize@url \@href}%
\providecommand \@href[1]{\@@startlink{#1}\@@href}%
\providecommand \@@href[1]{\endgroup#1\@@endlink}%
\providecommand \@sanitize@url [0]{\catcode `\\12\catcode `\$12\catcode
  `\&12\catcode `\#12\catcode `\^12\catcode `\_12\catcode `\%12\relax}%
\providecommand \@@startlink[1]{}%
\providecommand \@@endlink[0]{}%
\providecommand \url  [0]{\begingroup\@sanitize@url \@url }%
\providecommand \@url [1]{\endgroup\@href {#1}{\urlprefix }}%
\providecommand \urlprefix  [0]{URL }%
\providecommand \Eprint [0]{\href }%
\providecommand \doibase [0]{https://doi.org/}%
\providecommand \selectlanguage [0]{\@gobble}%
\providecommand \bibinfo  [0]{\@secondoftwo}%
\providecommand \bibfield  [0]{\@secondoftwo}%
\providecommand \translation [1]{[#1]}%
\providecommand \BibitemOpen [0]{}%
\providecommand \bibitemStop [0]{}%
\providecommand \bibitemNoStop [0]{.\EOS\space}%
\providecommand \EOS [0]{\spacefactor3000\relax}%
\providecommand \BibitemShut  [1]{\csname bibitem#1\endcsname}%
\let\auto@bib@innerbib\@empty
\bibitem [{\citenamefont
  {Schrödinger}(1935)}]{schroedinger1935GegenwaertigeSituationQuantenmechanik}%
  \BibitemOpen
  \bibfield  {author} {\bibinfo {author} {\bibfnamefont {E.}~\bibnamefont
  {Schrödinger}},\ }\bibfield  {title} {\bibinfo {title} {Die gegenwärtige
  situation in der quantenmechanik},\ }\href
  {https://doi.org/10.1007/BF01491891} {\bibfield  {journal} {\bibinfo
  {journal} {Naturwissenschaften}\ }\textbf {\bibinfo {volume} {23}},\ \bibinfo
  {pages} {807} (\bibinfo {year} {1935})}\BibitemShut {NoStop}%
\bibitem [{\citenamefont {Horodecki}\ \emph {et~al.}(2009)\citenamefont
  {Horodecki}, \citenamefont {Horodecki}, \citenamefont {Horodecki},\ and\
  \citenamefont {Horodecki}}]{horodecki2009QuantumEntanglement}%
  \BibitemOpen
  \bibfield  {author} {\bibinfo {author} {\bibfnamefont {R.}~\bibnamefont
  {Horodecki}}, \bibinfo {author} {\bibfnamefont {P.}~\bibnamefont
  {Horodecki}}, \bibinfo {author} {\bibfnamefont {M.}~\bibnamefont
  {Horodecki}},\ and\ \bibinfo {author} {\bibfnamefont {K.}~\bibnamefont
  {Horodecki}},\ }\bibfield  {title} {\bibinfo {title} {Quantum entanglement},\
  }\href {https://doi.org/10.1103/RevModPhys.81.865} {\bibfield  {journal}
  {\bibinfo  {journal} {Reviews of Modern Physics}\ }\textbf {\bibinfo {volume}
  {81}},\ \bibinfo {pages} {865} (\bibinfo {year} {2009})}\BibitemShut
  {NoStop}%
\bibitem [{\citenamefont {Chitambar}\ and\ \citenamefont
  {Gour}(2019)}]{chitambar2019QuantumResourceTheories}%
  \BibitemOpen
  \bibfield  {author} {\bibinfo {author} {\bibfnamefont {E.}~\bibnamefont
  {Chitambar}}\ and\ \bibinfo {author} {\bibfnamefont {G.}~\bibnamefont
  {Gour}},\ }\bibfield  {title} {\bibinfo {title} {Quantum resource theories},\
  }\href {https://doi.org/10.1103/RevModPhys.91.025001} {\bibfield  {journal}
  {\bibinfo  {journal} {Reviews of Modern Physics}\ }\textbf {\bibinfo {volume}
  {91}},\ \bibinfo {pages} {025001} (\bibinfo {year} {2019})}\BibitemShut
  {NoStop}%
\bibitem [{\citenamefont {Eisert}\ \emph {et~al.}(2000)\citenamefont {Eisert},
  \citenamefont {Jacobs}, \citenamefont {Papadopoulos},\ and\ \citenamefont
  {Plenio}}]{eisert2000OptimalLocalImplementation}%
  \BibitemOpen
  \bibfield  {author} {\bibinfo {author} {\bibfnamefont {J.}~\bibnamefont
  {Eisert}}, \bibinfo {author} {\bibfnamefont {K.}~\bibnamefont {Jacobs}},
  \bibinfo {author} {\bibfnamefont {P.}~\bibnamefont {Papadopoulos}},\ and\
  \bibinfo {author} {\bibfnamefont {M.~B.}\ \bibnamefont {Plenio}},\ }\bibfield
   {title} {\bibinfo {title} {Optimal local implementation of nonlocal quantum
  gates},\ }\href {https://doi.org/10.1103/PhysRevA.62.052317} {\bibfield
  {journal} {\bibinfo  {journal} {Physical Review A}\ }\textbf {\bibinfo
  {volume} {62}},\ \bibinfo {pages} {052317} (\bibinfo {year}
  {2000})}\BibitemShut {NoStop}%
\bibitem [{\citenamefont {Nielsen}\ \emph {et~al.}(2003)\citenamefont
  {Nielsen}, \citenamefont {Dawson}, \citenamefont {Dodd}, \citenamefont
  {Gilchrist}, \citenamefont {Mortimer}, \citenamefont {Osborne}, \citenamefont
  {Bremner}, \citenamefont {Harrow},\ and\ \citenamefont
  {Hines}}]{nielsen2003QuantumDynamicsPhysical}%
  \BibitemOpen
  \bibfield  {author} {\bibinfo {author} {\bibfnamefont {M.~A.}\ \bibnamefont
  {Nielsen}}, \bibinfo {author} {\bibfnamefont {C.~M.}\ \bibnamefont {Dawson}},
  \bibinfo {author} {\bibfnamefont {J.~L.}\ \bibnamefont {Dodd}}, \bibinfo
  {author} {\bibfnamefont {A.}~\bibnamefont {Gilchrist}}, \bibinfo {author}
  {\bibfnamefont {D.}~\bibnamefont {Mortimer}}, \bibinfo {author}
  {\bibfnamefont {T.~J.}\ \bibnamefont {Osborne}}, \bibinfo {author}
  {\bibfnamefont {M.~J.}\ \bibnamefont {Bremner}}, \bibinfo {author}
  {\bibfnamefont {A.~W.}\ \bibnamefont {Harrow}},\ and\ \bibinfo {author}
  {\bibfnamefont {A.}~\bibnamefont {Hines}},\ }\bibfield  {title} {\bibinfo
  {title} {Quantum dynamics as a physical resource},\ }\href
  {https://doi.org/10.1103/PhysRevA.67.052301} {\bibfield  {journal} {\bibinfo
  {journal} {Physical Review A}\ }\textbf {\bibinfo {volume} {67}},\ \bibinfo
  {pages} {052301} (\bibinfo {year} {2003})}\BibitemShut {NoStop}%
\bibitem [{\citenamefont {Regula}\ and\ \citenamefont
  {Takagi}(2020{\natexlab{a}})}]{regula2020OneshotManipulationDynamical}%
  \BibitemOpen
  \bibfield  {author} {\bibinfo {author} {\bibfnamefont {B.}~\bibnamefont
  {Regula}}\ and\ \bibinfo {author} {\bibfnamefont {R.}~\bibnamefont
  {Takagi}},\ }\href {http://arxiv.org/abs/2012.02215} {\bibinfo {title}
  {One-shot manipulation of dynamical quantum resources}} (\bibinfo {year}
  {2020}{\natexlab{a}}),\ \Eprint {https://arxiv.org/abs/2012.02215}
  {arXiv:2012.02215 [quant-ph]} \BibitemShut {NoStop}%
\bibitem [{\citenamefont {Yuan}\ \emph {et~al.}(2020)\citenamefont {Yuan},
  \citenamefont {Zeng}, \citenamefont {Gao},\ and\ \citenamefont
  {Zhao}}]{xyuan2020OneshotDynamicalResource}%
  \BibitemOpen
  \bibfield  {author} {\bibinfo {author} {\bibfnamefont {X.}~\bibnamefont
  {Yuan}}, \bibinfo {author} {\bibfnamefont {P.}~\bibnamefont {Zeng}}, \bibinfo
  {author} {\bibfnamefont {M.}~\bibnamefont {Gao}},\ and\ \bibinfo {author}
  {\bibfnamefont {Q.}~\bibnamefont {Zhao}},\ }\href
  {http://arxiv.org/abs/2012.02781} {\bibinfo {title} {One-shot dynamical
  resource theory}} (\bibinfo {year} {2020}),\ \Eprint
  {https://arxiv.org/abs/2012.02781} {arXiv:2012.02781 [quant-ph]} \BibitemShut
  {NoStop}%
\bibitem [{\citenamefont {Wang}\ \emph {et~al.}(2019)\citenamefont {Wang},
  \citenamefont {Wilde},\ and\ \citenamefont
  {Su}}]{xwang2019QuantifyingMagicQuantum-1}%
  \BibitemOpen
  \bibfield  {author} {\bibinfo {author} {\bibfnamefont {X.}~\bibnamefont
  {Wang}}, \bibinfo {author} {\bibfnamefont {M.~M.}\ \bibnamefont {Wilde}},\
  and\ \bibinfo {author} {\bibfnamefont {Y.}~\bibnamefont {Su}},\ }\bibfield
  {title} {\bibinfo {title} {Quantifying the magic of quantum channels},\
  }\href {https://doi.org/10.1088/1367-2630/ab451d} {\bibfield  {journal}
  {\bibinfo  {journal} {New Journal of Physics}\ }\textbf {\bibinfo {volume}
  {21}},\ \bibinfo {pages} {103002} (\bibinfo {year} {2019})}\BibitemShut
  {NoStop}%
\bibitem [{\citenamefont {Seddon}\ and\ \citenamefont
  {Campbell}(2019)}]{seddon2019QuantifyingMagicMultiqubit}%
  \BibitemOpen
  \bibfield  {author} {\bibinfo {author} {\bibfnamefont {J.~R.}\ \bibnamefont
  {Seddon}}\ and\ \bibinfo {author} {\bibfnamefont {E.~T.}\ \bibnamefont
  {Campbell}},\ }\bibfield  {title} {\bibinfo {title} {Quantifying magic for
  multi-qubit operations},\ }\href {https://doi.org/10.1098/rspa.2019.0251}
  {\bibfield  {journal} {\bibinfo  {journal} {Proceedings of the Royal Society
  A: Mathematical, Physical and Engineering Sciences}\ }\textbf {\bibinfo
  {volume} {475}},\ \bibinfo {pages} {20190251} (\bibinfo {year}
  {2019})}\BibitemShut {NoStop}%
\bibitem [{\citenamefont {Takagi}(2020)}]{takagi2020OptimalResourceCost}%
  \BibitemOpen
  \bibfield  {author} {\bibinfo {author} {\bibfnamefont {R.}~\bibnamefont
  {Takagi}},\ }\href {http://arxiv.org/abs/2006.12509} {\bibinfo {title}
  {Optimal resource cost for error mitigation}} (\bibinfo {year} {2020}),\
  \Eprint {https://arxiv.org/abs/2006.12509} {arXiv:2006.12509 [quant-ph]}
  \BibitemShut {NoStop}%
\bibitem [{\citenamefont {Gour}\ and\ \citenamefont
  {Winter}(2019)}]{gour2019HowQuantifyDynamical-1}%
  \BibitemOpen
  \bibfield  {author} {\bibinfo {author} {\bibfnamefont {G.}~\bibnamefont
  {Gour}}\ and\ \bibinfo {author} {\bibfnamefont {A.}~\bibnamefont {Winter}},\
  }\bibfield  {title} {\bibinfo {title} {How to quantify a dynamical quantum
  resource},\ }\href {https://doi.org/10.1103/PhysRevLett.123.150401}
  {\bibfield  {journal} {\bibinfo  {journal} {Physical Review Letters}\
  }\textbf {\bibinfo {volume} {123}},\ \bibinfo {pages} {150401} (\bibinfo
  {year} {2019})}\BibitemShut {NoStop}%
\bibitem [{\citenamefont {Liu}\ and\ \citenamefont
  {Yuan}(2020)}]{yliu2020OperationalResourceTheory}%
  \BibitemOpen
  \bibfield  {author} {\bibinfo {author} {\bibfnamefont {Y.}~\bibnamefont
  {Liu}}\ and\ \bibinfo {author} {\bibfnamefont {X.}~\bibnamefont {Yuan}},\
  }\bibfield  {title} {\bibinfo {title} {Operational resource theory of quantum
  channels},\ }\href {https://doi.org/10.1103/PhysRevResearch.2.012035}
  {\bibfield  {journal} {\bibinfo  {journal} {Physical Review Research}\
  }\textbf {\bibinfo {volume} {2}},\ \bibinfo {pages} {012035} (\bibinfo {year}
  {2020})}\BibitemShut {NoStop}%
\bibitem [{\citenamefont {Gour}\ and\ \citenamefont
  {Wilde}(2018)}]{gour2018EntropyQuantumChannel}%
  \BibitemOpen
  \bibfield  {author} {\bibinfo {author} {\bibfnamefont {G.}~\bibnamefont
  {Gour}}\ and\ \bibinfo {author} {\bibfnamefont {M.~M.}\ \bibnamefont
  {Wilde}},\ }\href@noop {} {\bibinfo {title} {Entropy of a quantum channel}}
  (\bibinfo {year} {2018}),\ \Eprint {https://arxiv.org/abs/1808.06980}
  {arXiv:1808.06980 [cond-mat, physics:hep-th, physics:math-ph,
  physics:quant-ph]} \BibitemShut {NoStop}%
\bibitem [{\citenamefont {D{\'i}az}\ \emph {et~al.}(2018)\citenamefont
  {D{\'i}az}, \citenamefont {Fang}, \citenamefont {Wang}, \citenamefont
  {Rosati}, \citenamefont {Skotiniotis}, \citenamefont {Calsamiglia},\ and\
  \citenamefont {Winter}}]{diaz2018UsingReusingCoherence}%
  \BibitemOpen
  \bibfield  {author} {\bibinfo {author} {\bibfnamefont {M.~G.}\ \bibnamefont
  {D{\'i}az}}, \bibinfo {author} {\bibfnamefont {K.}~\bibnamefont {Fang}},
  \bibinfo {author} {\bibfnamefont {X.}~\bibnamefont {Wang}}, \bibinfo {author}
  {\bibfnamefont {M.}~\bibnamefont {Rosati}}, \bibinfo {author} {\bibfnamefont
  {M.}~\bibnamefont {Skotiniotis}}, \bibinfo {author} {\bibfnamefont
  {J.}~\bibnamefont {Calsamiglia}},\ and\ \bibinfo {author} {\bibfnamefont
  {A.}~\bibnamefont {Winter}},\ }\bibfield  {title} {\bibinfo {title} {Using
  and reusing coherence to realize quantum processes},\ }\href
  {https://doi.org/10.22331/q-2018-10-19-100} {\bibfield  {journal} {\bibinfo
  {journal} {Quantum}\ }\textbf {\bibinfo {volume} {2}},\ \bibinfo {pages}
  {100} (\bibinfo {year} {2018})}\BibitemShut {NoStop}%
\bibitem [{\citenamefont {Saxena}\ \emph {et~al.}(2020)\citenamefont {Saxena},
  \citenamefont {Chitambar},\ and\ \citenamefont
  {Gour}}]{saxena2020DynamicalResourceTheory}%
  \BibitemOpen
  \bibfield  {author} {\bibinfo {author} {\bibfnamefont {G.}~\bibnamefont
  {Saxena}}, \bibinfo {author} {\bibfnamefont {E.}~\bibnamefont {Chitambar}},\
  and\ \bibinfo {author} {\bibfnamefont {G.}~\bibnamefont {Gour}},\ }\bibfield
  {title} {\bibinfo {title} {Dynamical resource theory of quantum coherence},\
  }\href {https://doi.org/10.1103/PhysRevResearch.2.023298} {\bibfield
  {journal} {\bibinfo  {journal} {Physical Review Research}\ }\textbf {\bibinfo
  {volume} {2}},\ \bibinfo {pages} {023298} (\bibinfo {year}
  {2020})}\BibitemShut {NoStop}%
\bibitem [{\citenamefont {Gour}\ and\ \citenamefont
  {Scandolo}(2019)}]{gour2019EntanglementBipartiteChannel}%
  \BibitemOpen
  \bibfield  {author} {\bibinfo {author} {\bibfnamefont {G.}~\bibnamefont
  {Gour}}\ and\ \bibinfo {author} {\bibfnamefont {C.~M.}\ \bibnamefont
  {Scandolo}},\ }\href@noop {} {\bibinfo {title} {The entanglement of a
  bipartite channel}} (\bibinfo {year} {2019}),\ \Eprint
  {https://arxiv.org/abs/1907.02552} {arXiv:1907.02552 [math-ph,
  physics:quant-ph]} \BibitemShut {NoStop}%
\bibitem [{\citenamefont {Bäuml}\ \emph {et~al.}(2019)\citenamefont {Bäuml},
  \citenamefont {Das}, \citenamefont {Wang},\ and\ \citenamefont
  {Wilde}}]{bauml2019ResourceTheoryEntanglement}%
  \BibitemOpen
  \bibfield  {author} {\bibinfo {author} {\bibfnamefont {S.}~\bibnamefont
  {Bäuml}}, \bibinfo {author} {\bibfnamefont {S.}~\bibnamefont {Das}},
  \bibinfo {author} {\bibfnamefont {X.}~\bibnamefont {Wang}},\ and\ \bibinfo
  {author} {\bibfnamefont {M.~M.}\ \bibnamefont {Wilde}},\ }\href
  {http://arxiv.org/abs/1907.04181} {\bibinfo {title} {Resource theory of
  entanglement for bipartite quantum channels}} (\bibinfo {year} {2019}),\
  \Eprint {https://arxiv.org/abs/1907.04181} {arXiv:1907.04181 [quant-ph]}
  \BibitemShut {NoStop}%
\bibitem [{\citenamefont {Bravyi}\ and\ \citenamefont
  {Kitaev}(2005)}]{bravyi2005UniversalQuantumComputation}%
  \BibitemOpen
  \bibfield  {author} {\bibinfo {author} {\bibfnamefont {S.}~\bibnamefont
  {Bravyi}}\ and\ \bibinfo {author} {\bibfnamefont {A.}~\bibnamefont
  {Kitaev}},\ }\bibfield  {title} {\bibinfo {title} {Universal quantum
  computation with ideal {{Clifford}} gates and noisy ancillas},\ }\href
  {https://doi.org/10.1103/PhysRevA.71.022316} {\bibfield  {journal} {\bibinfo
  {journal} {Physical Review A}\ }\textbf {\bibinfo {volume} {71}},\ \bibinfo
  {pages} {022316} (\bibinfo {year} {2005})}\BibitemShut {NoStop}%
\bibitem [{\citenamefont {Howard}\ and\ \citenamefont
  {Campbell}(2017)}]{howard2017ApplicationResourceTheory}%
  \BibitemOpen
  \bibfield  {author} {\bibinfo {author} {\bibfnamefont {M.}~\bibnamefont
  {Howard}}\ and\ \bibinfo {author} {\bibfnamefont {E.}~\bibnamefont
  {Campbell}},\ }\bibfield  {title} {\bibinfo {title} {Application of a
  resource theory for magic states to fault-tolerant quantum computing},\
  }\href {https://doi.org/10.1103/PhysRevLett.118.090501} {\bibfield  {journal}
  {\bibinfo  {journal} {Physical Review Letters}\ }\textbf {\bibinfo {volume}
  {118}},\ \bibinfo {pages} {090501} (\bibinfo {year} {2017})}\BibitemShut
  {NoStop}%
\bibitem [{\citenamefont {Wang}\ \emph {et~al.}(2020)\citenamefont {Wang},
  \citenamefont {Wilde},\ and\ \citenamefont
  {Su}}]{xwang2020EfficientlyComputableBounds}%
  \BibitemOpen
  \bibfield  {author} {\bibinfo {author} {\bibfnamefont {X.}~\bibnamefont
  {Wang}}, \bibinfo {author} {\bibfnamefont {M.~M.}\ \bibnamefont {Wilde}},\
  and\ \bibinfo {author} {\bibfnamefont {Y.}~\bibnamefont {Su}},\ }\bibfield
  {title} {\bibinfo {title} {Efficiently computable bounds for magic state
  distillation},\ }\href {https://doi.org/10.1103/PhysRevLett.124.090505}
  {\bibfield  {journal} {\bibinfo  {journal} {Physical Review Letters}\
  }\textbf {\bibinfo {volume} {124}},\ \bibinfo {pages} {090505} (\bibinfo
  {year} {2020})}\BibitemShut {NoStop}%
\bibitem [{\citenamefont {Berta}\ \emph {et~al.}(2013)\citenamefont {Berta},
  \citenamefont {Brandão}, \citenamefont {Christandl},\ and\ \citenamefont
  {Wehner}}]{berta2013EntanglementCostQuantum}%
  \BibitemOpen
  \bibfield  {author} {\bibinfo {author} {\bibfnamefont {M.}~\bibnamefont
  {Berta}}, \bibinfo {author} {\bibfnamefont {F.~G. S.~L.}\ \bibnamefont
  {Brandão}}, \bibinfo {author} {\bibfnamefont {M.}~\bibnamefont
  {Christandl}},\ and\ \bibinfo {author} {\bibfnamefont {S.}~\bibnamefont
  {Wehner}},\ }\bibfield  {title} {\bibinfo {title} {Entanglement cost of
  quantum channels},\ }\href {https://doi.org/10.1109/TIT.2013.2268533}
  {\bibfield  {journal} {\bibinfo  {journal} {IEEE Transactions on Information
  Theory}\ }\textbf {\bibinfo {volume} {59}},\ \bibinfo {pages} {6779}
  (\bibinfo {year} {2013})}\BibitemShut {NoStop}%
\bibitem [{\citenamefont {Wang}\ and\ \citenamefont
  {Wilde}(2018)}]{xwang2018ExactEntanglementCost}%
  \BibitemOpen
  \bibfield  {author} {\bibinfo {author} {\bibfnamefont {X.}~\bibnamefont
  {Wang}}\ and\ \bibinfo {author} {\bibfnamefont {M.~M.}\ \bibnamefont
  {Wilde}},\ }\href {http://arxiv.org/abs/1809.09592} {\bibinfo {title} {Exact
  entanglement cost of quantum states and channels under {{PPT}}-preserving
  operations}} (\bibinfo {year} {2018}),\ \Eprint
  {https://arxiv.org/abs/1809.09592} {arXiv:1809.09592 [quant-ph]} \BibitemShut
  {NoStop}%
\bibitem [{\citenamefont {Werner}(1989)}]{werner1989QuantumEPRHidden}%
  \BibitemOpen
  \bibfield  {author} {\bibinfo {author} {\bibfnamefont {R.~F.}\ \bibnamefont
  {Werner}},\ }\bibfield  {title} {\bibinfo {title} {Quantum states with
  einstein-podolsky-rosen correlations admitting a hidden-variable model},\
  }\href {https://doi.org/10.1103/PhysRevA.40.4277} {\bibfield  {journal}
  {\bibinfo  {journal} {Physical Review A}\ }\textbf {\bibinfo {volume} {40}},\
  \bibinfo {pages} {4277} (\bibinfo {year} {1989})}\BibitemShut {NoStop}%
\bibitem [{\citenamefont {Chitambar}\ \emph {et~al.}(2012)\citenamefont
  {Chitambar}, \citenamefont {Cui},\ and\ \citenamefont {Lo}}]{chitambar2012}%
  \BibitemOpen
  \bibfield  {author} {\bibinfo {author} {\bibfnamefont {E.}~\bibnamefont
  {Chitambar}}, \bibinfo {author} {\bibfnamefont {W.}~\bibnamefont {Cui}},\
  and\ \bibinfo {author} {\bibfnamefont {H.-K.}\ \bibnamefont {Lo}},\
  }\bibfield  {title} {\bibinfo {title} {Increasing entanglement monotones by
  separable operations},\ }\href
  {https://doi.org/10.1103/PhysRevLett.108.240504} {\bibfield  {journal}
  {\bibinfo  {journal} {Physical Review Letters}\ }\textbf {\bibinfo {volume}
  {108}},\ \bibinfo {pages} {240504} (\bibinfo {year} {2012})}\BibitemShut
  {NoStop}%
\bibitem [{\citenamefont {Chitambar}\ \emph {et~al.}(2014)\citenamefont
  {Chitambar}, \citenamefont {Leung}, \citenamefont {Mančinska}, \citenamefont
  {Ozols},\ and\ \citenamefont {Winter}}]{chitambar2014}%
  \BibitemOpen
  \bibfield  {author} {\bibinfo {author} {\bibfnamefont {E.}~\bibnamefont
  {Chitambar}}, \bibinfo {author} {\bibfnamefont {D.}~\bibnamefont {Leung}},
  \bibinfo {author} {\bibfnamefont {L.}~\bibnamefont {Mančinska}}, \bibinfo
  {author} {\bibfnamefont {M.}~\bibnamefont {Ozols}},\ and\ \bibinfo {author}
  {\bibfnamefont {A.}~\bibnamefont {Winter}},\ }\bibfield  {title} {\bibinfo
  {title} {Everything you always wanted to know about locc (but were afraid to
  ask)},\ }\href {https://doi.org/10.1007/s00220-014-1953-9} {\bibfield
  {journal} {\bibinfo  {journal} {Communications in Mathematical Physics}\
  }\textbf {\bibinfo {volume} {328}},\ \bibinfo {pages} {303} (\bibinfo {year}
  {2014})}\BibitemShut {NoStop}%
\bibitem [{\citenamefont {Regula}\ \emph {et~al.}(2019)\citenamefont {Regula},
  \citenamefont {Fang}, \citenamefont {Wang},\ and\ \citenamefont
  {Gu}}]{regula2019OneshotEntanglementDistillation}%
  \BibitemOpen
  \bibfield  {author} {\bibinfo {author} {\bibfnamefont {B.}~\bibnamefont
  {Regula}}, \bibinfo {author} {\bibfnamefont {K.}~\bibnamefont {Fang}},
  \bibinfo {author} {\bibfnamefont {X.}~\bibnamefont {Wang}},\ and\ \bibinfo
  {author} {\bibfnamefont {M.}~\bibnamefont {Gu}},\ }\bibfield  {title}
  {\bibinfo {title} {One-shot entanglement distillation beyond local operations
  and classical communication},\ }\href
  {https://doi.org/10.1088/1367-2630/ab4732} {\bibfield  {journal} {\bibinfo
  {journal} {New Journal of Physics}\ }\textbf {\bibinfo {volume} {21}},\
  \bibinfo {pages} {103017} (\bibinfo {year} {2019})}\BibitemShut {NoStop}%
\bibitem [{\citenamefont {Cirac}\ \emph {et~al.}(2001)\citenamefont {Cirac},
  \citenamefont {D{\"u}r}, \citenamefont {Kraus},\ and\ \citenamefont
  {Lewenstein}}]{cirac2001}%
  \BibitemOpen
  \bibfield  {author} {\bibinfo {author} {\bibfnamefont {J.~I.}\ \bibnamefont
  {Cirac}}, \bibinfo {author} {\bibfnamefont {W.}~\bibnamefont {D{\"u}r}},
  \bibinfo {author} {\bibfnamefont {B.}~\bibnamefont {Kraus}},\ and\ \bibinfo
  {author} {\bibfnamefont {M.}~\bibnamefont {Lewenstein}},\ }\bibfield  {title}
  {\bibinfo {title} {Entangling operations and their implementation using a
  small amount of entanglement},\ }\href
  {https://doi.org/10.1103/PhysRevLett.86.544} {\bibfield  {journal} {\bibinfo
  {journal} {Physical Review Letters}\ }\textbf {\bibinfo {volume} {86}},\
  \bibinfo {pages} {544} (\bibinfo {year} {2001})}\BibitemShut {NoStop}%
\bibitem [{\citenamefont {Bennett}\ \emph {et~al.}(1999)\citenamefont
  {Bennett}, \citenamefont {{DiVincenzo}}, \citenamefont {Fuchs}, \citenamefont
  {Mor}, \citenamefont {Rains}, \citenamefont {Shor}, \citenamefont {Smolin},\
  and\ \citenamefont {Wootters}}]{bennett1999}%
  \BibitemOpen
  \bibfield  {author} {\bibinfo {author} {\bibfnamefont {C.~H.}\ \bibnamefont
  {Bennett}}, \bibinfo {author} {\bibfnamefont {D.~P.}\ \bibnamefont
  {{DiVincenzo}}}, \bibinfo {author} {\bibfnamefont {C.~A.}\ \bibnamefont
  {Fuchs}}, \bibinfo {author} {\bibfnamefont {T.}~\bibnamefont {Mor}}, \bibinfo
  {author} {\bibfnamefont {E.}~\bibnamefont {Rains}}, \bibinfo {author}
  {\bibfnamefont {P.~W.}\ \bibnamefont {Shor}}, \bibinfo {author}
  {\bibfnamefont {J.~A.}\ \bibnamefont {Smolin}},\ and\ \bibinfo {author}
  {\bibfnamefont {W.~K.}\ \bibnamefont {Wootters}},\ }\bibfield  {title}
  {\bibinfo {title} {Quantum nonlocality without entanglement},\ }\href
  {https://doi.org/10.1103/PhysRevA.59.1070} {\bibfield  {journal} {\bibinfo
  {journal} {Physical Review A}\ }\textbf {\bibinfo {volume} {59}},\ \bibinfo
  {pages} {1070} (\bibinfo {year} {1999})}\BibitemShut {NoStop}%
\bibitem [{\citenamefont {Chitambar}\ and\ \citenamefont
  {Duan}(2009)}]{chitambar2009}%
  \BibitemOpen
  \bibfield  {author} {\bibinfo {author} {\bibfnamefont {E.}~\bibnamefont
  {Chitambar}}\ and\ \bibinfo {author} {\bibfnamefont {R.}~\bibnamefont
  {Duan}},\ }\bibfield  {title} {\bibinfo {title} {Nonlocal entanglement
  transformations achievable by separable operations},\ }\href
  {https://doi.org/10.1103/PhysRevLett.103.110502} {\bibfield  {journal}
  {\bibinfo  {journal} {Physical Review Letters}\ }\textbf {\bibinfo {volume}
  {103}},\ \bibinfo {pages} {110502} (\bibinfo {year} {2009})}\BibitemShut
  {NoStop}%
\bibitem [{\citenamefont {Harrow}\ and\ \citenamefont
  {Nielsen}(2003)}]{harrow2003RobustnessQuantumGates}%
  \BibitemOpen
  \bibfield  {author} {\bibinfo {author} {\bibfnamefont {A.~W.}\ \bibnamefont
  {Harrow}}\ and\ \bibinfo {author} {\bibfnamefont {M.~A.}\ \bibnamefont
  {Nielsen}},\ }\bibfield  {title} {\bibinfo {title} {Robustness of quantum
  gates in the presence of noise},\ }\href
  {https://doi.org/10.1103/PhysRevA.68.012308} {\bibfield  {journal} {\bibinfo
  {journal} {Physical Review A}\ }\textbf {\bibinfo {volume} {68}},\ \bibinfo
  {pages} {012308} (\bibinfo {year} {2003})}\BibitemShut {NoStop}%
\bibitem [{\citenamefont {Peres}(1996)}]{peres1996}%
  \BibitemOpen
  \bibfield  {author} {\bibinfo {author} {\bibfnamefont {A.}~\bibnamefont
  {Peres}},\ }\bibfield  {title} {\bibinfo {title} {Separability criterion for
  density matrices},\ }\href {https://doi.org/10.1103/PhysRevLett.77.1413}
  {\bibfield  {journal} {\bibinfo  {journal} {Physical Review Letters}\
  }\textbf {\bibinfo {volume} {77}},\ \bibinfo {pages} {1413} (\bibinfo {year}
  {1996})}\BibitemShut {NoStop}%
\bibitem [{Note1()}]{Note1}%
  \BibitemOpen
  \bibinfo {note} {More precisely, it is called a completely-PPT-preserving
  channel.}\BibitemShut {Stop}%
\bibitem [{\citenamefont
  {Rains}(1999)}]{rains1999BoundDistillableEntanglement}%
  \BibitemOpen
  \bibfield  {author} {\bibinfo {author} {\bibfnamefont {E.~M.}\ \bibnamefont
  {Rains}},\ }\bibfield  {title} {\bibinfo {title} {Bound on distillable
  entanglement},\ }\href {https://doi.org/10.1103/PhysRevA.60.179} {\bibfield
  {journal} {\bibinfo  {journal} {Physical Review A}\ }\textbf {\bibinfo
  {volume} {60}},\ \bibinfo {pages} {179} (\bibinfo {year} {1999})}\BibitemShut
  {NoStop}%
\bibitem [{\citenamefont
  {Rains}(2001)}]{rains2001SemidefiniteProgramDistillable}%
  \BibitemOpen
  \bibfield  {author} {\bibinfo {author} {\bibfnamefont {E.}~\bibnamefont
  {Rains}},\ }\bibfield  {title} {\bibinfo {title} {A semidefinite program for
  distillable entanglement},\ }\href {https://doi.org/10.1109/18.959270}
  {\bibfield  {journal} {\bibinfo  {journal} {IEEE Transactions on Information
  Theory}\ }\textbf {\bibinfo {volume} {47}},\ \bibinfo {pages} {2921}
  (\bibinfo {year} {2001})}\BibitemShut {NoStop}%
\bibitem [{\citenamefont {Chiribella}\ \emph
  {et~al.}(2008{\natexlab{a}})\citenamefont {Chiribella}, \citenamefont
  {D'Ariano},\ and\ \citenamefont
  {Perinotti}}]{chiribella2008TransformingQuantumOperations}%
  \BibitemOpen
  \bibfield  {author} {\bibinfo {author} {\bibfnamefont {G.}~\bibnamefont
  {Chiribella}}, \bibinfo {author} {\bibfnamefont {G.~M.}\ \bibnamefont
  {D'Ariano}},\ and\ \bibinfo {author} {\bibfnamefont {P.}~\bibnamefont
  {Perinotti}},\ }\bibfield  {title} {\bibinfo {title} {Transforming quantum
  operations: quantum supermaps},\ }\href
  {https://doi.org/10.1209/0295-5075/83/30004} {\bibfield  {journal} {\bibinfo
  {journal} {EPL (Europhysics Letters)}\ }\textbf {\bibinfo {volume} {83}},\
  \bibinfo {pages} {30004} (\bibinfo {year} {2008}{\natexlab{a}})}\BibitemShut
  {NoStop}%
\bibitem [{\citenamefont {Gour}(2019)}]{gour2019ComparisonQuantumChannels}%
  \BibitemOpen
  \bibfield  {author} {\bibinfo {author} {\bibfnamefont {G.}~\bibnamefont
  {Gour}},\ }\bibfield  {title} {\bibinfo {title} {Comparison of quantum
  channels by superchannels},\ }\href
  {https://doi.org/10.1109/TIT.2019.2907989} {\bibfield  {journal} {\bibinfo
  {journal} {{IEEE} Transactions on Information Theory}\ }\textbf {\bibinfo
  {volume} {65}},\ \bibinfo {pages} {5880} (\bibinfo {year}
  {2019})}\BibitemShut {NoStop}%
\bibitem [{\citenamefont {Kim}\ \emph {et~al.}(2020)\citenamefont {Kim},
  \citenamefont {Lee}, \citenamefont {Lami},\ and\ \citenamefont
  {Plenio}}]{hjkim2020OneshotManipulationEntanglement}%
  \BibitemOpen
  \bibfield  {author} {\bibinfo {author} {\bibfnamefont {H.-J.}\ \bibnamefont
  {Kim}}, \bibinfo {author} {\bibfnamefont {S.}~\bibnamefont {Lee}}, \bibinfo
  {author} {\bibfnamefont {L.}~\bibnamefont {Lami}},\ and\ \bibinfo {author}
  {\bibfnamefont {M.~B.}\ \bibnamefont {Plenio}},\ }\href
  {http://arxiv.org/abs/2012.02631} {\bibinfo {title} {One-shot manipulation of
  entanglement for quantum channels}} (\bibinfo {year} {2020}),\ \Eprint
  {https://arxiv.org/abs/2012.02631} {arXiv:2012.02631 [quant-ph]} \BibitemShut
  {NoStop}%
\bibitem [{\citenamefont {Gour}\ and\ \citenamefont
  {Scandolo}(2020)}]{gour2020DynamicalEntanglement}%
  \BibitemOpen
  \bibfield  {author} {\bibinfo {author} {\bibfnamefont {G.}~\bibnamefont
  {Gour}}\ and\ \bibinfo {author} {\bibfnamefont {C.~M.}\ \bibnamefont
  {Scandolo}},\ }\bibfield  {title} {\bibinfo {title} {Dynamical
  {{entanglement}}},\ }\href {https://doi.org/10.1103/PhysRevLett.125.180505}
  {\bibfield  {journal} {\bibinfo  {journal} {Physical Review Letters}\
  }\textbf {\bibinfo {volume} {125}},\ \bibinfo {pages} {180505} (\bibinfo
  {year} {2020})}\BibitemShut {NoStop}%
\bibitem [{\citenamefont {Wang}\ and\ \citenamefont
  {Wilde}(2020)}]{xwang2020CostQuantumEntanglement}%
  \BibitemOpen
  \bibfield  {author} {\bibinfo {author} {\bibfnamefont {X.}~\bibnamefont
  {Wang}}\ and\ \bibinfo {author} {\bibfnamefont {M.~M.}\ \bibnamefont
  {Wilde}},\ }\bibfield  {title} {\bibinfo {title} {Cost of quantum
  entanglement simplified},\ }\href
  {https://doi.org/10.1103/PhysRevLett.125.040502} {\bibfield  {journal}
  {\bibinfo  {journal} {Physical Review Letters}\ }\textbf {\bibinfo {volume}
  {125}},\ \bibinfo {pages} {040502} (\bibinfo {year} {2020})}\BibitemShut
  {NoStop}%
\bibitem [{\citenamefont {Takagi}\ and\ \citenamefont
  {Regula}(2019)}]{takagi2019GeneralResourceTheories}%
  \BibitemOpen
  \bibfield  {author} {\bibinfo {author} {\bibfnamefont {R.}~\bibnamefont
  {Takagi}}\ and\ \bibinfo {author} {\bibfnamefont {B.}~\bibnamefont
  {Regula}},\ }\bibfield  {title} {\bibinfo {title} {General resource theories
  in quantum mechanics and beyond: operational characterization via
  discrimination tasks},\ }\href {https://doi.org/10.1103/PhysRevX.9.031053}
  {\bibfield  {journal} {\bibinfo  {journal} {Physical Review X}\ }\textbf
  {\bibinfo {volume} {9}},\ \bibinfo {pages} {031053} (\bibinfo {year}
  {2019})}\BibitemShut {NoStop}%
\bibitem [{\citenamefont {Liu}\ \emph {et~al.}(2017)\citenamefont {Liu},
  \citenamefont {Hu},\ and\ \citenamefont
  {Lloyd}}]{zwliu2017ResourceDestroyingMaps}%
  \BibitemOpen
  \bibfield  {author} {\bibinfo {author} {\bibfnamefont {Z.-W.}\ \bibnamefont
  {Liu}}, \bibinfo {author} {\bibfnamefont {X.}~\bibnamefont {Hu}},\ and\
  \bibinfo {author} {\bibfnamefont {S.}~\bibnamefont {Lloyd}},\ }\bibfield
  {title} {\bibinfo {title} {Resource destroying maps},\ }\href
  {https://doi.org/10.1103/PhysRevLett.118.060502} {\bibfield  {journal}
  {\bibinfo  {journal} {Physical Review Letters}\ }\textbf {\bibinfo {volume}
  {118}},\ \bibinfo {pages} {060502} (\bibinfo {year} {2017})}\BibitemShut
  {NoStop}%
\bibitem [{\citenamefont {Regula}(2017)}]{regula2017ConvexGeometryQuantum}%
  \BibitemOpen
  \bibfield  {author} {\bibinfo {author} {\bibfnamefont {B.}~\bibnamefont
  {Regula}},\ }\bibfield  {title} {\bibinfo {title} {Convex geometry of quantum
  resource quantification},\ }\href {https://doi.org/10.1088/1751-8121/aa9100}
  {\bibfield  {journal} {\bibinfo  {journal} {Journal of Physics A:
  Mathematical and Theoretical}\ }\textbf {\bibinfo {volume} {51}},\ \bibinfo
  {pages} {045303} (\bibinfo {year} {2017})}\BibitemShut {NoStop}%
\bibitem [{\citenamefont {Regula}\ \emph {et~al.}(2020)\citenamefont {Regula},
  \citenamefont {Bu}, \citenamefont {Takagi},\ and\ \citenamefont
  {Liu}}]{regula2020BenchmarkingOneshotDistillation}%
  \BibitemOpen
  \bibfield  {author} {\bibinfo {author} {\bibfnamefont {B.}~\bibnamefont
  {Regula}}, \bibinfo {author} {\bibfnamefont {K.}~\bibnamefont {Bu}}, \bibinfo
  {author} {\bibfnamefont {R.}~\bibnamefont {Takagi}},\ and\ \bibinfo {author}
  {\bibfnamefont {Z.-W.}\ \bibnamefont {Liu}},\ }\bibfield  {title} {\bibinfo
  {title} {Benchmarking one-shot distillation in general quantum resource
  theories},\ }\href {https://doi.org/10.1103/PhysRevA.101.062315} {\bibfield
  {journal} {\bibinfo  {journal} {Physical Review A}\ }\textbf {\bibinfo
  {volume} {101}},\ \bibinfo {pages} {062315} (\bibinfo {year}
  {2020})}\BibitemShut {NoStop}%
\bibitem [{\citenamefont {Rockafellar}(1996)}]{rockafellar1996ConvexAnalysis}%
  \BibitemOpen
  \bibfield  {author} {\bibinfo {author} {\bibfnamefont {R.~T.}\ \bibnamefont
  {Rockafellar}},\ }\href@noop {} {\emph {\bibinfo {title} {Convex
  {{Analysis}}:}}}\ (\bibinfo  {publisher} {{Princeton University Press}},\
  \bibinfo {address} {{Princeton, NJ}},\ \bibinfo {year} {1996})\BibitemShut
  {NoStop}%
\bibitem [{Note2()}]{Note2}%
  \BibitemOpen
  \bibinfo {note} {The diamond-distance between quantum channels is induced
  from the diamond norm which is defined for a map $\protect \mathcal {E}_{A}$
  as $\lVert \protect \mathcal {E}_{A} \rVert _{\diamond }\mathrel {\mathop
  :}\mathrel {\mkern -1.2mu}=\protect \qopname \relax m{max}_{\psi
  _{AR}}\protect \pnorm [1]{\protect \mathcal {E}_{A}(\psi _{AR})}$ \cite
  {kitaev1997QuantumComputationsAlgorithms}. It has an operational meaning in
  the quantum channel discrimination task \cite
  {piani2009AllEntangledStates}.}\BibitemShut {Stop}%
\bibitem [{\citenamefont {Zanardi}(2000)}]{zanardi2000}%
  \BibitemOpen
  \bibfield  {author} {\bibinfo {author} {\bibfnamefont {P.}~\bibnamefont
  {Zanardi}},\ }\bibfield  {title} {\bibinfo {title} {Stabilizing quantum
  information},\ }\href {https://doi.org/10.1103/PhysRevA.63.012301} {\bibfield
   {journal} {\bibinfo  {journal} {Physical Review A}\ }\textbf {\bibinfo
  {volume} {63}},\ \bibinfo {pages} {012301} (\bibinfo {year}
  {2000})}\BibitemShut {NoStop}%
\bibitem [{Note3()}]{Note3}%
  \BibitemOpen
  \bibinfo {note} {One can define the quantity with maximization over $
  \protect \text {SepD}(AA'\protect \!:\protect \!BB') $.}\BibitemShut {Stop}%
\bibitem [{\citenamefont {Dür}\ \emph {et~al.}(1999)\citenamefont {Dür},
  \citenamefont {Cirac},\ and\ \citenamefont
  {Tarrach}}]{dur1999SeparabilityDistillabilityMultiparticle}%
  \BibitemOpen
  \bibfield  {author} {\bibinfo {author} {\bibfnamefont {W.}~\bibnamefont
  {Dür}}, \bibinfo {author} {\bibfnamefont {J.~I.}\ \bibnamefont {Cirac}},\
  and\ \bibinfo {author} {\bibfnamefont {R.}~\bibnamefont {Tarrach}},\
  }\bibfield  {title} {\bibinfo {title} {Separability and distillability of
  multiparticle quantum systems},\ }\href
  {https://doi.org/10.1103/PhysRevLett.83.3562} {\bibfield  {journal} {\bibinfo
   {journal} {Physical Review Letters}\ }\textbf {\bibinfo {volume} {83}},\
  \bibinfo {pages} {3562} (\bibinfo {year} {1999})}\BibitemShut {NoStop}%
\bibitem [{\citenamefont {Horodecki}\ \emph {et~al.}(1999)\citenamefont
  {Horodecki}, \citenamefont {Horodecki},\ and\ \citenamefont
  {Horodecki}}]{horodecki1999general}%
  \BibitemOpen
  \bibfield  {author} {\bibinfo {author} {\bibfnamefont {M.}~\bibnamefont
  {Horodecki}}, \bibinfo {author} {\bibfnamefont {P.}~\bibnamefont
  {Horodecki}},\ and\ \bibinfo {author} {\bibfnamefont {R.}~\bibnamefont
  {Horodecki}},\ }\bibfield  {title} {\bibinfo {title} {General teleportation
  channel, singlet fraction, and quasidistillation},\ }\href
  {https://doi.org/10.1103/PhysRevA.60.1888} {\bibfield  {journal} {\bibinfo
  {journal} {Physical Review A}\ }\textbf {\bibinfo {volume} {60}},\ \bibinfo
  {pages} {1888} (\bibinfo {year} {1999})}\BibitemShut {NoStop}%
\bibitem [{\citenamefont {Fang}\ and\ \citenamefont
  {Liu}(2020)}]{kfang2020NogoTheoremsQuantum}%
  \BibitemOpen
  \bibfield  {author} {\bibinfo {author} {\bibfnamefont {K.}~\bibnamefont
  {Fang}}\ and\ \bibinfo {author} {\bibfnamefont {Z.-W.}\ \bibnamefont {Liu}},\
  }\href {http://arxiv.org/abs/2010.11822} {\bibinfo {title} {No-go theorems
  for quantum resource purification: New approach and channel theory}}
  (\bibinfo {year} {2020}),\ \Eprint {https://arxiv.org/abs/2010.11822}
  {arXiv:2010.11822 [quant-ph]} \BibitemShut {NoStop}%
\bibitem [{\citenamefont {Regula}\ and\ \citenamefont
  {Takagi}(2020{\natexlab{b}})}]{regula2020FundamentalLimitationsQuantum}%
  \BibitemOpen
  \bibfield  {author} {\bibinfo {author} {\bibfnamefont {B.}~\bibnamefont
  {Regula}}\ and\ \bibinfo {author} {\bibfnamefont {R.}~\bibnamefont
  {Takagi}},\ }\href {http://arxiv.org/abs/2010.11942} {\bibinfo {title}
  {Fundamental limitations on quantum channel manipulation}} (\bibinfo {year}
  {2020}{\natexlab{b}}),\ \Eprint {https://arxiv.org/abs/2010.11942}
  {arXiv:2010.11942 [math-ph, physics:quant-ph]} \BibitemShut {NoStop}%
\bibitem [{\citenamefont {Chiribella}\ \emph
  {et~al.}(2008{\natexlab{b}})\citenamefont {Chiribella}, \citenamefont
  {D'Ariano},\ and\ \citenamefont
  {Perinotti}}]{chiribella2008QuantumCircuitArchitecture}%
  \BibitemOpen
  \bibfield  {author} {\bibinfo {author} {\bibfnamefont {G.}~\bibnamefont
  {Chiribella}}, \bibinfo {author} {\bibfnamefont {G.~M.}\ \bibnamefont
  {D'Ariano}},\ and\ \bibinfo {author} {\bibfnamefont {P.}~\bibnamefont
  {Perinotti}},\ }\bibfield  {title} {\bibinfo {title} {Quantum circuit
  architecture},\ }\href {https://doi.org/10.1103/PhysRevLett.101.060401}
  {\bibfield  {journal} {\bibinfo  {journal} {Physical Review Letters}\
  }\textbf {\bibinfo {volume} {101}},\ \bibinfo {pages} {060401} (\bibinfo
  {year} {2008}{\natexlab{b}})}\BibitemShut {NoStop}%
\bibitem [{\citenamefont {Chiribella}\ \emph {et~al.}(2009)\citenamefont
  {Chiribella}, \citenamefont {D'Ariano},\ and\ \citenamefont
  {Perinotti}}]{chiribella2009TheoreticalFrameworkQuantum}%
  \BibitemOpen
  \bibfield  {author} {\bibinfo {author} {\bibfnamefont {G.}~\bibnamefont
  {Chiribella}}, \bibinfo {author} {\bibfnamefont {G.~M.}\ \bibnamefont
  {D'Ariano}},\ and\ \bibinfo {author} {\bibfnamefont {P.}~\bibnamefont
  {Perinotti}},\ }\bibfield  {title} {\bibinfo {title} {Theoretical framework
  for quantum networks},\ }\href {https://doi.org/10.1103/PhysRevA.80.022339}
  {\bibfield  {journal} {\bibinfo  {journal} {Physical Review A}\ }\textbf
  {\bibinfo {volume} {80}},\ \bibinfo {pages} {022339} (\bibinfo {year}
  {2009})}\BibitemShut {NoStop}%
\bibitem [{\citenamefont {Chiribella}\ \emph {et~al.}(2013)\citenamefont
  {Chiribella}, \citenamefont {D'Ariano}, \citenamefont {Perinotti},\ and\
  \citenamefont {Valiron}}]{chiribella2013QuantumComputationsDefinite}%
  \BibitemOpen
  \bibfield  {author} {\bibinfo {author} {\bibfnamefont {G.}~\bibnamefont
  {Chiribella}}, \bibinfo {author} {\bibfnamefont {G.~M.}\ \bibnamefont
  {D'Ariano}}, \bibinfo {author} {\bibfnamefont {P.}~\bibnamefont
  {Perinotti}},\ and\ \bibinfo {author} {\bibfnamefont {B.}~\bibnamefont
  {Valiron}},\ }\bibfield  {title} {\bibinfo {title} {Quantum computations
  without definite causal structure},\ }\href
  {https://doi.org/10.1103/PhysRevA.88.022318} {\bibfield  {journal} {\bibinfo
  {journal} {Physical Review A}\ }\textbf {\bibinfo {volume} {88}},\ \bibinfo
  {pages} {022318} (\bibinfo {year} {2013})}\BibitemShut {NoStop}%
\bibitem [{\citenamefont
  {Kitaev}(1997)}]{kitaev1997QuantumComputationsAlgorithms}%
  \BibitemOpen
  \bibfield  {author} {\bibinfo {author} {\bibfnamefont {A.~Y.}\ \bibnamefont
  {Kitaev}},\ }\bibfield  {title} {\bibinfo {title} {Quantum computations:
  Algorithms and error correction},\ }\href
  {https://doi.org/10.1070/RM1997v052n06ABEH002155} {\bibfield  {journal}
  {\bibinfo  {journal} {Russian Mathematical Surveys}\ }\textbf {\bibinfo
  {volume} {52}},\ \bibinfo {pages} {1191} (\bibinfo {year}
  {1997})}\BibitemShut {NoStop}%
\bibitem [{\citenamefont {Piani}\ and\ \citenamefont
  {Watrous}(2009)}]{piani2009AllEntangledStates}%
  \BibitemOpen
  \bibfield  {author} {\bibinfo {author} {\bibfnamefont {M.}~\bibnamefont
  {Piani}}\ and\ \bibinfo {author} {\bibfnamefont {J.}~\bibnamefont
  {Watrous}},\ }\bibfield  {title} {\bibinfo {title} {All entangled states are
  useful for channel discrimination},\ }\href
  {https://doi.org/10.1103/PhysRevLett.102.250501} {\bibfield  {journal}
  {\bibinfo  {journal} {Physical Review Letters}\ }\textbf {\bibinfo {volume}
  {102}},\ \bibinfo {pages} {250501} (\bibinfo {year} {2009})}\BibitemShut
  {NoStop}%
\end{thebibliography}%
\end{document}